\newtheorem{theorem}{Theorem}[section]
\newtheorem{prop}[theorem]{Proposition}
\newtheorem{defn}[theorem]{Definition}
\newtheorem{lemma}[theorem]{Lemma}
\newtheorem{coro}[theorem]{Corollary}
\newtheorem{prop-def}{Proposition-Definition}[section]
\newcommand{\nc}{\newcommand}
\newcommand{\delete}[1]{}
\nc{\mlabel}[1]{\label{#1}}  % Use this to suppress names
\nc{\mcite}[1]{\cite{#1}}  % Use this to suppress names
\nc{\mref}[1]{\ref{#1}}  % Use this to suppress names
\nc{\mbibitem}[1]{\bibitem{#1}} % Use this to show number
\nc{\mlabel}[1]{\label{#1}  % Use the next two lines to show names
{\hfill \hspace{1cm}{\bf{{\ }\hfill(#1)}}}}
\nc{\mcite}[1]{\cite{#1}{{\bf{{\ }(#1)}}}}  % Use this lines to show names
\nc{\mref}[1]{\ref{#1}{{\bf{{\ }(#1)}}}}  % Use this lines to show names
\nc{\mbibitem}[1]{\bibitem[\bf #1]{#1}} % Use this to show name
\nc{\bfk}{\mathbf{k}}
\nc{\Der}{\mathrm{Der}}
\nc{\Ker}{\mathrm{Ker}}
\begin{document}

\title{Semi-Associative $3$-Algebras }\footnotetext{ Corresponding author: Ruipu Bai, E-mail: bairuipu@hbu.edu.cn.}

\author{RuiPu  Bai}
\address{College of Mathematics and Information Science,
Hebei University}
\email{bairuipu@hbu.edu.cn}

\author{Yan Zhang}
\address{College of Mathematics and Information  Science,
Hebei University, Baoding 071002, China} \email{zhycn0913@163.com}

\date{}

\maketitle

%% Classification and key words; note that the 2010 classification is used:

\renewcommand{\thefootnote}{}

\footnote{2010 \emph{Mathematics Subject Classification}: 15A69; 17D99.}

\footnote{\emph{Key words and phrases}: semi-associative 3-algebra, 3-Lie algebra, double module, extension}

\renewcommand{\thefootnote}{\arabic{footnote}}
\setcounter{footnote}{0}

%%%%%%%%

\begin{abstract}

 A new 3-ary non-associative algebra, which is called a semi-associative $3$-algebra, is introduced, and the  double modules and double   extensions by  cocycles  are provided. Every semi-associative $3$-algebra $(A, \{ , , \})$ has an adjacent 3-Lie algebra  $(A, [ , , ]_c)$.  From a semi-associative $3$-algebra $(A,  \{, , \})$, a double module $(\phi, \psi, M)$ and a cocycle $\theta$, a semi-direct product semi-associative $3$-algebra $A\ltimes_{\phi\psi} M $ and a double extension $(A\dot+A^*, \{ , , \}_{\theta})$ are  constructed,  and structures are studied.
\end{abstract}

\section{Introduction}

 3-ary algebraic systems are applied in
mathematics and mathematical physics. For example, 3-Lie algebras
provided by Filippov  in 1985 \cite{FV, B1}, have close relation with
the theory of integrable-systems and Nambu mechanical-systems
\cite{BL, BW}.  Bagger and Lambert \cite{BW2,BW3, BW4} proposed a
supersymmetric fields theory model for multiple $M_2$-branes based
on the metric $3$-Lie algebras \cite{BW7}.

C. Bai and coauthors \cite{BGS}, defined pre-3-Lie algebras and local cocycle 3-Lie bialgebras and generalized Yang-Baxter equation in 3-Lie algebras, and proved that  tensor form solutions  of 3-Lie Yang-Baxter equation can be constructed by   local cocycle 3-Lie bialgebras. Three classes of 3-algebras are constructed in \cite{CFJMZ}:

(1) The classical Nambu bracket based on three variables ( say $x, y, z$ ) is the simplest to
compute, in most situations. It involves the Jacobian-like determinant of partial derivatives
of three functions $A, B, C$, $$ \{ A, B, C\}=\frac{\partial(A, B, C)}{(\partial x, \partial y, \partial z)},$$
and it satisfies Filippove identity:
$$
\{A, B, \{C, D, E\}\}=\{\{A, B,C\}, D, E\}+\{C, \{A, B, D\}, E\}+ \{C, D, \{A, B, E\}\}.
$$

(2) The  sum of single operators producted with commutators of the remaining two,
or as anticommutators acting on the commutators,

$$[A, B, C ] = A [B, C] + B [C, A] + C [A, B] = [B, C] A + [C, A] B + [A, B] C .$$

The multiplication does not satisfy Filippov identity, and the 3-algebra can by
realized by operators  \cite{CFJMZ}.

(3) The re-packaging
commutators of any Lie algebra to define

$$\langle A, B, C\rangle = [A, B] T r (C ) + [C, A] T r (B) + [B, C] T r (A). $$
\\
This is again a totally antisymmetric trilinear combination, but it is a singular construction
for finite dimensional realizations in the sense that $T r \langle A, B, C\rangle = 0,$ and authors in \cite{BBW3} proved that all the $m$-dimensional 3-Lie algebras with $m\leq 5$,  except for the unique  simple 3-Lie algebra,
can be constructed by the method from Lie algebras.

Likewise, in \cite{BGS}, the paper provides an algebra which is called the $3$-pre-Lie algebra, and induces a sub-adjacent $3$-Lie algebra. However,
 the $3$-pre-Lie algebra does not satisfy associative property. Motivated by this, in this paper, we define a new 3-algebra, the 3-ary multiplication is none completely antisymmetry, but it has close relation with 3-Lie algebras.

In the following we assume that $\mathbb Z$ is the set of integers,  $\mathbb F$ is an algebraic closed field of characteristic zero, and for a subset $S$ of a linear space $V$,  $\langle S\rangle $ denotes the subspace spanned by $S$.
And we usually omit  zero products of basis vectors when we list  the multiplication of $3$-algebras.

\section{ Natures of semi-associative 3-algebras}

\begin{defn}  A semi-associative 3-algebra $(A, \{ , , \})$ is a vector space $A$ with a 3-ary linear multiplication $\{
, , \}: A\otimes A\otimes A\rightarrow A$ satisfying that  for all
$x_i\in A, 1\leq i\leq 5$,
\begin{equation}
\{x_1, x_2, x_3\}=-\{x_2, x_1, x_3\},
\label{eq:21}
\end{equation}
\begin{equation}
\{x_1, \{x_2, x_3, x_4\}, x_5\}=\{x_1, x_2, \{x_3, x_4, x_5\}\},
\label{eq:22}
\end{equation}
\begin{equation}
\{x_1, \{x_2, x_3, x_4\}, x_5\}=\{x_5, \{x_2, x_3, x_4\}, x_1\}+\{x_1, \{x_5,
x_3, x_4\}, x_2\}.
\label{eq:23}
\end{equation}

\label{defn:21}
\end{defn}

For subspaces $B_1, B_2, B_3$ of $A$,
denotes $\{B_1, B_2, B_3\}$  the subspace spanned by  vectors $\{x_1, x_2, x_3\}$, $\forall x_i\in B_i$, $i=1, 2, 3$;  $\{A, A, A\}$ is denoted by $A^1$, which is called {\it the derived algebra} of $A$. If  $A^1=0$, then $A$ is called abelian.

For example, let $A$ be a $3$-dimensional vector space with   a basis $v_1, v_2, v_3$. Then $A$ is a semi-associative 3-algebra with the multiplication
$$\{v_1, v_2, v_2\}=-\{v_2, v_1, v_2\}=v_3,~~ \mbox{ and others are zero.}$$

\begin{defn} A subalgebra  of a semi-associative 3-algebra $A$ is a subspace $B$ satisfying $\{B, B, B\}\subseteq B$. If $B$ satisfies that $\{A, A, B\}\subseteq B$ and $(A, B, A)\subseteq B$, then $B$ is called an ideal of $A$.

\label{defn:22}
\end{defn}

Obviously, $\{0\}$ and $A$ are ideals of $A$, which are called trivial ideals. If $A$ has non proper ideals, then $A$ is called {\it  a simple semi-associative 3-algebra.}

For a given subspace $V\leq A$, the subalgebra

$$Z_A(V)=\{~ x ~|~ x\in A, ~\{x, V, A~\}=\{~V, A, x~\}=0~\}, $$
is called {\it the centralizer of $V$ in $A$}.  $Z_A(A)$ is called {\it the center} of $A$, and is simply denoted by $Z(A)$, that is,
$Z(A)=\{~ x  \mid   x\in A, \{x, A, A\}=\{A, A, x\}=0~\}$. It is clear that  $Z(A)$ is an ideal.

\begin{defn} Let $A$ and $A_1$ be semi-associative 3-algebras. If a linear mapping (linear isomorphism) $f: A\rightarrow A_1$
satisfies
$$
f\{x, y, z\}=\{f(x), f(y), f(z)\},~~ \forall x, y, z\in A,$$
then $f$ is called an algebra homomorphism (an algebra isomorphism).

\label{defn:21}
\end{defn}

Let $I$ be an ideal of $A$. Then quotient space $A/I=\{ x+I \mid ~~x\in A\}$ is a semi-associative 3-algebra in the multiplication
 $$\{x+I, y+I, z+I\}=\{x, y, z\}+I, ~~ \forall x, y, z\in A,$$ which is called {\it the quotient algebra of $A$ by $I$}, and
 $\pi: A\rightarrow A/I$, $\pi(x)=x+I$,  $\forall x\in A$, is a surjective algebra homomorphism.

\begin{prop} Let $A$  be a semi-associative algebra, $I_1, I_2, I_3$ be ideals of $A$. Then

1) $I_1+I_2$, $I_1\cap I_2$ and $\{I_1, I_2, I_3\}$ are ideals of $A$.

2) If $I_1\subseteq I_2$, then $I_2/I_1$ is an ideal of $A/I_1$.
\label{prop:21}
\end{prop}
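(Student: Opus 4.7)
For the statements about $I_1+I_2$ and $I_1\cap I_2$ I would use linearity of the bracket together with the ideal conditions on $I_1,I_2$. An arbitrary element of $I_1+I_2$ decomposes as $u_1+u_2$ with $u_i\in I_i$, and the defining inclusions $\{A,A,I_i\}\subseteq I_i$ and $\{A,I_i,A\}\subseteq I_i$ yield the required closure componentwise. For $I_1\cap I_2$, any $u$ in the intersection sits in both ideals, so $\{A,A,u\}$ is contained in $I_1$ and in $I_2$, hence in $I_1\cap I_2$, and the middle-slot condition is identical.

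The main content of (1) is that $\{I_1,I_2,I_3\}$ is an ideal of $A$. I need to show $\{y,z,\{a_1,a_2,a_3\}\}\in\{I_1,I_2,I_3\}$ and $\{y,\{a_1,a_2,a_3\},z\}\in\{I_1,I_2,I_3\}$ for every $y,z\in A$ and every $a_i\in I_i$. The plan is to rewrite each double bracket by alternating (\ref{eq:21}) and (\ref{eq:22}), so as to push the arbitrary elements $y,z$ into an inner triple bracket that automatically lies in the relevant $I_j$ by the ideal property of $I_j$. Explicitly, starting from $\{y,z,\{a_1,a_2,a_3\}\}$, one use of (\ref{eq:22}) gives $\{y,\{z,a_1,a_2\},a_3\}$; swapping $z$ and $a_1$ inside the inner bracket via (\ref{eq:21}), then applying (\ref{eq:22}), then a swap of $y$ and $a_1$ outside via (\ref{eq:21}), then one final (\ref{eq:22}) produces $\{a_1,\{y,z,a_2\},a_3\}$, in which $\{y,z,a_2\}\in I_2$. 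A completely parallel five-step chain sends $\{y,\{a_1,a_2,a_3\},z\}$ to $\{a_1,a_2,\{y,a_3,z\}\}$ with $\{y,a_3,z\}\in I_3$. In both cases the terminal expression is visibly in $\{I_1,I_2,I_3\}$.

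The main obstacle is sequencing (\ref{eq:21}) and (\ref{eq:22}) correctly: after just one use of (\ref{eq:22}) the intermediate expression $\{y,\{z,a_1,a_2\},a_3\}$ only witnesses membership in the larger subspace $\{A,I_1,I_3\}$, not in $\{I_1,I_2,I_3\}$, so antisymmetry swaps must be inserted at just the right moments to shuttle each $a_i$ into its intended outer slot while burying $y$ and $z$ in the innermost bracket. Identity (\ref{eq:23}) turns out not to be required for this proposition. For part (2) the verification is immediate: given $\bar{x}=x+I_1\in I_2/I_1$ (so $x\in I_2$) and $\bar{y},\bar{z}\in A/I_1$, the quotient multiplication gives $\{\bar{y},\bar{z},\bar{x}\}=\{y,z,x\}+I_1\in I_2/I_1$ because $\{y,z,x\}\in I_2$, and the middle-slot condition $\{\bar{y},\bar{x},\bar{z}\}=\{y,x,z\}+I_1\in I_2/I_1$ follows identically from $\{A,I_2,A\}\subseteq I_2$.
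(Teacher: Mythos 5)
Your proof is correct and is precisely the ``direct computation according to Definition \ref{defn:22}'' that the paper leaves to the reader: the chains $\{y,z,\{a_1,a_2,a_3\}\}=\{a_1,\{y,z,a_2\},a_3\}$ and $\{y,\{a_1,a_2,a_3\},z\}=\{a_1,a_2,\{y,a_3,z\}\}$ obtained by alternating \eqref{eq:21} and \eqref{eq:22} check out and carry the only nontrivial part, namely that $\{I_1,I_2,I_3\}$ is an ideal. The arguments for $I_1+I_2$, $I_1\cap I_2$ and the quotient are routine as you say, and you are right that \eqref{eq:23} is not needed.
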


\begin{proof} It  can be verified by a direct computation according to Definition \ref{defn:22}.
\end{proof}

\begin{prop} Let $A$  and $A_1$ be  semi-associative algebras, $f: A\rightarrow A_1$ be an algebra  homomorphism.  Then

1) $K=Ker f=\{x\in A\mid f(x)=0\}$ is an ideal of $A$, and $f(A)$ is a subalgebra of $A_1$.

2) If $f(A)=A_1$, then linear  mapping $\bar f: A/K\rightarrow A_1$, $\bar f(x+K)=f(x)$, for all $x\in A$, is an algebra isomorphism, and there is one to one correspondence  between  subalgebras of $A$ containing  $K$, with the subalgebras of $A_1$, and ideals  correspond to ideals.

\label{prop:22}
\end{prop}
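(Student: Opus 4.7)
The plan is to mirror the standard First Isomorphism and Correspondence Theorems (as for associative or Lie algebras), translating each step into the 3-ary setting by means of the homomorphism relation $f\{x,y,z\}=\{f(x),f(y),f(z)\}$ and the closure conditions of Definition \ref{defn:22}. Essentially every verification below amounts to the assertion that membership in $K$, in a subalgebra, or in an ideal is transported across $f$ or $f^{-1}$.

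For part 1), to show $K$ is an ideal I would take $k\in K$ and $a,b\in A$, and apply $f$ to $\{a,b,k\}$ and to $\{a,k,b\}$; in each case the image is a 3-bracket in $A_1$ with a zero argument, hence $0$, so both products lie in $K$. This gives $\{A,A,K\}\subseteq K$ and $\{A,K,A\}\subseteq K$, so $K$ is an ideal by Definition \ref{defn:22} (the skew-symmetry \eqref{eq:21} in fact makes the two conditions equivalent). That $f(A)$ is a subalgebra of $A_1$ is immediate from $\{f(x),f(y),f(z)\}=f(\{x,y,z\})\in f(A)$.

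For part 2), well-definedness of $\bar f$ is exactly the implication $x-y\in K\Rightarrow f(x)=f(y)$, and compatibility with the 3-bracket follows at once from the quotient multiplication introduced before Proposition \ref{prop:21}. Injectivity reduces to $\bar f(x+K)=0\Rightarrow x\in K\Rightarrow x+K=K$, and surjectivity is precisely the hypothesis $f(A)=A_1$. For the correspondence I would set up the mutually inverse maps $B\mapsto f(B)$ on subalgebras $B\supseteq K$ of $A$ and $B'\mapsto f^{-1}(B')$ on subalgebras $B'$ of $A_1$; the containment $K\subseteq B$ is precisely what forces $f^{-1}(f(B))=B$. Subalgebras go to subalgebras in each direction by the homomorphism property already used in part 1); that ideals correspond to ideals uses surjectivity of $f$ (to write any $y'\in A_1$ as $f(y)$) in the forward direction, and the homomorphism property in the backward direction. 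I do not expect any serious obstacle: the only mildly delicate points are the identity $f^{-1}(f(B))=B$, which genuinely requires $K\subseteq B$, and the routine bookkeeping of checking that both closure conditions $\{A,A,B\}\subseteq B$ and $\{A,B,A\}\subseteq B$ of Definition \ref{defn:22} are preserved under direct and inverse image along $f$.
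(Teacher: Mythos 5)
Your proof is correct and is precisely the ``direct computation'' that the paper leaves to the reader, so it matches the paper's (omitted) argument in approach and substance. One small caveat: your parenthetical claim that the skew-symmetry \eqref{eq:21} makes the two ideal conditions equivalent is not right --- \eqref{eq:21} only swaps the first two arguments, so it identifies $\{A,B,A\}\subseteq B$ with $\{B,A,A\}\subseteq B$ but says nothing about $\{A,A,B\}\subseteq B$ --- yet this is harmless here because you verify both closure conditions for $K$ directly.
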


\begin{proof} The result is easily  verified by a direct computation.
\end{proof}

\begin{theorem}
 Let $A$ be a semi-associative 3-algebra.

1) If there exist nonzero vectors $e_1, e_2, e_3, e_4\in A$ such that $\{e_1, e_2, e_3\}=e_4$, then $e_4\neq \lambda e_s$, where $\lambda\in \mathbb F, \lambda\neq 0, $  $s=1, 2, 3$.

2) If there exist nonzero vectors  $e_1, e_2, e_3\in A$ such that $\{e_1, e_2, e_3\}\neq 0$, then $\{e_1, e_2, e_3\}\neq \lambda e_1+ \mu e_2, \forall \lambda, \mu\in \mathbb F$.  Therefore,  if
 $\{e_1, e_2, e_3\}=ae_1 + be_2+ce_4\neq 0$, then $c\neq 0$, and $e_1, e_2, e_4$ are linearly independent.

\label{thm:23}
\end{theorem}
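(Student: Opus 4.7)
The plan is to extract from the associativity-like identity (\ref{eq:22}) two substitutions whose right-hand sides collapse to $0$ via the antisymmetry (\ref{eq:21}), so that the corresponding left-hand sides become $\mu\{e_1,e_2,z\}$ and $\lambda\{e_1,e_2,z\}$; once this is in place, both parts of the theorem follow by simple scalar bookkeeping. The main obstacle is spotting these substitutions—after that, everything is mechanical, and identity (\ref{eq:23}) does not appear to be needed.

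For part 1, I would split into three cases. If $\{e_1,e_2,e_3\}=\lambda e_2$, substitute $(x_1,\ldots,x_5)=(e_1,e_1,e_2,e_3,z)$ into (\ref{eq:22}): the right side is $\{e_1,e_1,\{e_2,e_3,z\}\}=0$ by (\ref{eq:21}), while the left side simplifies under the hypothesis to $\lambda\{e_1,e_2,z\}$, so $\{e_1,e_2,z\}=0$ for every $z$, contradicting $\{e_1,e_2,e_3\}\neq 0$. If $\{e_1,e_2,e_3\}=\lambda e_1$, use the symmetric substitution $(e_2,e_2,e_1,e_3,z)$: antisymmetry gives $\{e_2,e_1,e_3\}=-\lambda e_1$, so the left side becomes $\lambda\{e_1,e_2,z\}$ while the right side $\{e_2,e_2,\cdot\}$ vanishes, again forcing $\{e_1,e_2,z\}=0$. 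If $\{e_1,e_2,e_3\}=\lambda e_3$, instead read (\ref{eq:22}) from right to left with $(x_1,e_1,e_1,e_2,e_3)$: the left side is zero because $\{e_1,e_1,e_2\}=0$, and the right side equals $\lambda\{x_1,e_1,e_3\}$, so $\{x_1,e_1,e_3\}=0$ for all $x_1$; taking $x_1=e_2$ and applying (\ref{eq:21}) yields $\{e_1,e_2,e_3\}=0$, a contradiction.

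For part 2, assume $\{e_1,e_2,e_3\}=\lambda e_1+\mu e_2$ is nonzero; the two substitutions from the first two subcases of part 1 apply simultaneously and yield $\mu\{e_1,e_2,z\}=0$ and $\lambda\{e_1,e_2,z\}=0$ respectively, so since $(\lambda,\mu)\neq(0,0)$ we again conclude $\{e_1,e_2,z\}=0$ for all $z$, contradicting $\{e_1,e_2,e_3\}\neq 0$. The concluding clause follows immediately: if $\{e_1,e_2,e_3\}=ae_1+be_2+ce_4\neq 0$ with $c=0$, or with $e_4\in\mathrm{span}(e_1,e_2)$, then $\{e_1,e_2,e_3\}$ again takes the forbidden form $\lambda e_1+\mu e_2$; finally, $e_1$ and $e_2$ are themselves linearly independent because otherwise (\ref{eq:21}) would collapse $\{e_1,\alpha e_1,e_3\}$ to $0$, so $e_1,e_2,e_4$ are linearly independent as claimed.
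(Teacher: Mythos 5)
Your proposal is correct and follows essentially the same route as the paper: both arguments use only the antisymmetry \eqref{eq:21} and the associativity identity \eqref{eq:22}, rewriting $\{e_1,e_2,e_3\}$ so that a repeated entry appears in the first two slots and forces the product to vanish (the paper substitutes concrete elements where you carry a free variable $z$, but the identities invoked are the same). Your explicit justification of the final linear-independence clause is slightly more detailed than the paper's, which asserts it without comment.
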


\begin{proof}~ If there exist nonzero vectors $e_1, e_2, e_3, e_4\in A$ such that $\{e_1, e_2, e_3\}=e_4=\lambda e_1$, $\lambda\in \mathbb F, \lambda\neq 0$, then by Eqs \eqref{eq:21} and \eqref{eq:22}

$e_4=\{e_1, e_2, e_3\}=\frac{1}{\lambda}\{e_2, \{e_2, e_1, e_3\},
e_3\}=\frac{1}{\lambda}\{e_2, e_2, \{e_1, e_3,
e_3\}\}=0$.
Contradiction. Therefore, $e_4\neq \lambda e_1$.
By the similar discussion to the above,  $e_4 \neq \mu e_2$, $\mu\in \mathbb F$.

If $e_4=\alpha e_3$, $\alpha\in \mathbb F, \alpha\neq 0$,  then

$e_4=\{e_1, e_2, e_3\}=\frac{1}{\alpha}\{e_1, e_2, \{e_1, e_2,
e_3\}\}=-\frac{1}{\alpha}\{e_1, e_1, \{e_2, e_2,
e_3\}\}=0$.
Contradiction. The result 1) follows.

If there exist nonzero vectors  $e_1, e_2, e_3\in A$ such that $\{e_1, e_2, e_3\}=\lambda e_1+ \mu e_2\neq 0, \lambda, \mu\in \mathbb F$.
 Without loss of generality, we can suppose $\mu\neq 0$. Then
 $$\{e_1, \lambda e_1+ \mu e_2, \frac{1}{\mu}e_3\}=\lambda e_1+ \mu e_2\neq 0.$$
 Contradiction. Therefore, $\{e_1, e_2, e_3\}\neq \lambda e_1+ \mu e_2, \forall \lambda, \mu\in \mathbb F$.
The proof is complete.
\end{proof}

\begin{theorem} Let $A$ be a non-abelian semi-associative 3-algebra with $\dim A=m\geq 3$. Then there exist linearly independent vectors $e_i, e_j, e_k\in A$ such that
$$\{e_i, e_j, e_k\}\neq 0.$$
\label{thm:24}
\end{theorem}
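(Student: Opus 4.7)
The plan is to argue by contradiction: suppose that every triple of linearly independent vectors has vanishing bracket, and then derive $A^1=0$, contradicting the non-abelian hypothesis. The fact that $\dim A\geq 3$ will be used in exactly one place, namely to guarantee the existence of a third basis vector disjoint from any chosen pair, which enables a small perturbation argument.

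First I would fix a basis $e_1,\dots,e_m$ of $A$ and reduce, by trilinearity, the claim that $A^1=0$ to showing $\{e_i,e_j,e_k\}=0$ for all choices of indices $i,j,k$. I would then split into cases according to which of $i,j,k$ coincide. The case of pairwise distinct indices is immediate from the standing assumption, since $e_i,e_j,e_k$ are linearly independent. The case $i=j$ is handled directly by the skew-symmetry axiom \eqref{eq:21}.

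The two remaining cases, $k=i$ and $k=j$ (with $i\neq j$), are where I use $m\geq 3$: pick any index $l$ distinct from both $i$ and $j$. Then $e_i,e_j,e_i+e_l$ are linearly independent, because $e_i+e_l\notin\langle e_i,e_j\rangle$ (its $e_l$-component is nonzero), so the contradiction hypothesis gives
\[
0=\{e_i,e_j,e_i+e_l\}=\{e_i,e_j,e_i\}+\{e_i,e_j,e_l\}=\{e_i,e_j,e_i\},
\]
the last equality holding because $\{e_i,e_j,e_l\}=0$ by the pairwise-distinct case already treated. The same trick with $e_i,e_j,e_j+e_l$ yields $\{e_i,e_j,e_j\}=0$.

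Combining all cases, $\{e_i,e_j,e_k\}=0$ for every basis triple, hence by trilinearity $A^1=0$, contradicting the assumption that $A$ is non-abelian. I do not anticipate any serious obstacle; the only delicate point is making sure to exploit $\dim A\geq 3$ to secure a third index $l$, without which the perturbation $e_i\mapsto e_i+e_l$ could fall inside $\langle e_i,e_j\rangle$ and the argument would collapse.
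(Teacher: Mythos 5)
Your proposal is correct and is essentially the contrapositive of the paper's own argument: both hinge on the same perturbation trick of replacing a dependent third entry $v\in\langle e_i,e_j\rangle$ by $v+e_l$ for a basis vector $e_l$ outside that span, which leaves the product unchanged (the cross term vanishes by the already-settled independent case) while making the triple linearly independent. The paper runs this directly starting from a witness $\{e_1,e_2,A\}\neq 0$, whereas you run a complete case analysis on basis triples and conclude $A^1=0$; the mathematical content is the same.
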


\begin{proof} Since $A$ is non-abelian, by \eqref{eq:21}, there are linearly independent vectors $e_1, e_2\in A$, such that $\{e_1, e_2, A\}\neq 0.$
Suppose  $\{ e_1, e_2, \cdots, e_m\} $ is a basis of $A$.  If $\{e_1, e_2, e_l\}=0$, for all $l\geq 3$, then there exist $a, b\in \mathbb F$, such that
$\{e_1, e_2, ae_1+be_2\}\neq 0$, we get  $\{e_1, e_2, ae_1+be_2+e_3\}\neq 0$, and vectors $e_1, e_2, ae_1+be_2+e_3$ are linearly independent. The proof is complete.
\end{proof}

\begin{theorem} Let $A$ be an $s$-dimensional semi-associative 3-algebra with $s\leq 6$. Then $A^{1}\subseteq Z(A)$.
\label{thm:25}
\end{theorem}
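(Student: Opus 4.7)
The plan is to prove the theorem by establishing that the quintic map
\begin{equation*}
T(x_1, x_2, x_3, x_4, x_5) := \{x_1, x_2, \{x_3, x_4, x_5\}\}
\end{equation*}
vanishes identically on $A^5$. A routine rearrangement using \eqref{eq:21} and \eqref{eq:22} yields the identity $\{\{a, b, c\}, d, e\} = \{a, b, \{c, d, e\}\}$, so once $T \equiv 0$ is established both $\{A, A, A^1\}$ and $\{A^1, A, A\}$ will vanish, giving $A^1 \subseteq Z(A)$.

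The first step is to record the three equivalent presentations
\begin{equation*}
T(x_1, x_2, x_3, x_4, x_5) = \{x_1, x_2, \{x_3, x_4, x_5\}\} = \{x_1, \{x_2, x_3, x_4\}, x_5\} = \{\{x_1, x_2, x_3\}, x_4, x_5\},
\end{equation*}
all of which follow from \eqref{eq:22} together with the rearrangement just mentioned. These expose antisymmetry of $T$ in the pairs $(1,2)$ (from the outer bracket in the first form), $(3,4)$ (from the inner bracket in the first form), and $(2,3)$ (from the middle bracket in the second form). Since the transpositions $(1,2)$, $(2,3)$, $(3,4)$ generate $S_4$, the map $T$ is totally antisymmetric in its first four arguments.

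The second step uses \eqref{eq:23}, which in this notation reads
\begin{equation*}
T(x_1, x_2, x_3, x_4, x_5) = T(x_5, x_2, x_3, x_4, x_1) + T(x_1, x_5, x_3, x_4, x_2).
\end{equation*}
Substituting $x_3 = x_5 =: y$ makes both terms on the right vanish, since each has a repeated argument among its first four (now antisymmetric) slots, so $T(x_1, x_2, y, x_4, y) = 0$. Combined with the $S_4$-antisymmetry of the first four slots and multilinearity, this gives $T(x_1, x_2, x_3, x_4, x_5) = 0$ whenever $x_5 \in \operatorname{span}(x_1, x_2, x_3, x_4)$. Hence $T$ can only be non-zero on five-tuples of five linearly independent vectors, which already settles the case $\dim A \leq 4$.

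For $5 \leq \dim A \leq 6$, fix any five linearly independent vectors $e_1, \ldots, e_5$ and apply the identity from \eqref{eq:23} to each cyclic shift $(e_k, e_{k+1}, e_{k+2}, e_{k+3}, e_{k+4})$ for $k = 1, \ldots, 5$, with subscripts taken modulo $5$. Using the $S_4$-antisymmetry to sort the first four slots into ascending index order and tracking signs, each of the ten resulting quintic expressions becomes $\pm T_i$, where
\begin{equation*}
T_i := T(e_{j_1}, e_{j_2}, e_{j_3}, e_{j_4}, e_i), \quad \{j_1 < \cdots < j_4\} = \{1, \ldots, 5\} \setminus \{i\}.
\end{equation*}
The five instances of \eqref{eq:23} produce a $5 \times 5$ linear system in $T_1, \ldots, T_5$ whose elimination, in characteristic zero, reduces to $3T_1 = 0$ and forces every $T_i = 0$. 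The principal obstacle is the sign bookkeeping required to put the ten terms into the canonical $T_i$-form; once that is handled, the linear algebra is immediate and the proof is complete.
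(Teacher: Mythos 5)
Your proof is correct, and it takes a genuinely different route from the paper's. The paper fixes one nonzero product $\{e_i,e_j,e_k\}=e_l$ and runs a case analysis on the linear dependence of $e_i,e_j,e_k,e_l$, then splits the generic case further by $\dim A=4,5,6$ and kills the remaining structure constants one at a time; you instead package everything into the quintic map $T$, extract total antisymmetry of its first four slots from \eqref{eq:21} and \eqref{eq:22} alone (your three presentations are all valid: $\{\{a,b,c\},d,e\}=\{a,b,\{c,d,e\}\}$ does follow by alternating \eqref{eq:21} and \eqref{eq:22}, and the adjacent transpositions $(1,2),(2,3),(3,4)$ do generate $S_4$), and then reduce \eqref{eq:23} to a small linear system. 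I checked the sign bookkeeping you deferred: with $T_i$ as you define them, the five cyclic instances of \eqref{eq:23} read
\begin{equation*}
T_5=-T_1+T_2,\qquad T_1=T_2-T_3,\qquad -T_2=-T_3+T_4,\qquad T_3=T_4-T_5,\qquad -T_4=-T_5-T_1,
\end{equation*}
and elimination gives $T_3=T_5$, $T_2=T_5+T_1$, $T_4=-T_1$, $T_1=-2T_5$, and finally $3T_5=0$, so every $T_i$ vanishes in characteristic zero (indeed in any characteristic $\neq 3$). Beyond being uniform and making the role of each axiom transparent, your argument buys something the paper's does not: the hypothesis $\dim A\leq 6$ is never used, since the vanishing of $T$ on dependent $5$-tuples together with its vanishing on independent $5$-tuples covers every case in every dimension. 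You have therefore actually proved $\{A^1,A,A\}=\{A,A^1,A\}=\{A,A,A^1\}=0$, hence $A^1\subseteq Z(A)$, for all semi-associative $3$-algebras over $\mathbb F$; you should state that strengthening explicitly rather than presenting the argument as a proof of the $s\leq 6$ case only.
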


\begin{proof} If $A$ is abelian, then the result is trivial.

If $A^1\neq 0$, then we will  prove  $\{A, A^1, A\}=\{A, A, A^1 \}=0$.

Since $A^1\neq 0$, then  there are  nonzero vectors $e_i, e_j, e_k, e_l\in A$ such that $$\{e_i, e_j, e_k\}=e_l.$$

$ \bullet$   $\{ e_i, e_j, e_k \}$ are linearly dependent.

Then we can  suppose $e_k=ae_i+be_j$, $a, b\in \mathbb F$, and
   $e_l=a\{e_i, e_j, e_i\}+b\{e_i, e_j, e_j\}$.

  By \eqref{eq:22} and \eqref{eq:23},  for all $e_m, e_n\in A$, we have

 $\{e_m, e_l, e_n\}=a\{e_m, \{e_i, e_j, e_i\}, e_n\}+b\{e_m, \{e_i, e_j, e_j\}, e_n\}$

 \hspace{2cm}$=-a\{e_m, e_j, \{e_i, e_i, e_n\}\}+b\{e_m, e_i, \{e_j, e_j, e_n\}\}=0$,

$\{e_m, e_n, e_l\}=a\{e_m, e_n, \{e_i, e_j, e_i\}\}+b\{e_m, e_n, \{e_i, e_j, e_j\}\}$

\hspace{2cm}$=a\{e_i, e_i, \{e_n, e_j, e_m\}\}+b\{e_j, e_j, \{e_n, e_i, e_m\}=0.$
\\Therefore, $\{A, e_l, A\}=\{A, A, e_l\}=0$.

$\bullet\bullet$  $\{ e_i, e_j, e_k \}$ are linearly independent.

$\bullet\bullet_1$ If there $a, b, c\in \mathbb F,$ such that  $e_l=ae_i+be_j+ce_k$, then
$$\{e_i, e_j, e_k\}=ae_i+be_j+ce_k, ~~ a, b, c\in \mathbb F. $$
 By Theorem  \ref{thm:23}, $ac\neq 0$, or $bc\neq 0$. Without loss of generality, suppose $ac\neq 0$,  then
 $$\frac{c}{a}\{e_j, e_k, e_k\}=\frac{1}{a}\{e_j, ce_k, e_k\}=\frac{1}{a}\{e_j, \{e_i, e_j, e_k\}-ae_i-be_j, e_k\}=e_l. $$
 Therefore, for all $e_m, e_n\in A$,

   $\{e_m, e_l, e_n\}=\{e_m, \frac{c}{a}\{e_j, e_k, e_k\}, e_n\}=\{e_m, \frac{c}{a}e_j, \{e_k, e_k, e_n\}\}=0$, and

   $\{e_m, e_n, e_l\}=\{e_m, e_n, \frac{c}{a}\{e_j, e_k, e_k\}\}= \frac{c}{a}\{e_k, \{e_k, e_n, e_j\}, e_m\}=0$.
 \\ It follows  $\{A, A, e_l\}=\{A, e_l, A\}=0.$

 $\bullet\bullet_2$  $\{ e_i, e_j, e_k, e_l \}$ are linearly independent.

Then in the case $\dim A=4$, $\{ e_i, e_j, e_k, e_l\}$ is a
basis of  $A$.  For all $e_m, e_n\in A$, set

 $e_m=a_1e_i+b_1e_j+c_1e_k+d_1e_l, $  $e_n=a_2e_i+b_2e_j+c_2e_k+d_2e_l,
$ $a_i, b_i, c_i, d_i\in \mathbb F, i=1, 2$.
Then $\{e_m, e_l, e_n\}=\{a_1e_i+b_1e_j+c_1e_k+d_1e_l, e_l, e_n\}$

\hspace{2.6cm}$=a_1\{e_i, e_i, \{e_j, e_k, e_n\}\}-b_1\{e_j, e_j, \{e_i, e_k, e_n\}\}-c_1\{e_k, e_i, \{e_k, e_j, e_n\}\}$

\hspace{2.6cm}$=c_1\{e_k, e_k, \{e_i, e_j, e_n\}\}=0.$\\
By the similar to the above, $\{e_m, e_n, e_l\}=0$.

In the case  $\dim A=5$, then we can suppose that $\{ e_i, e_j, e_k, e_l, e_t\}$ is a  basis of $A$.

From the discussion of the case $\dim A=4$, we only  need to prove that $\{e_t, e_l, e_t\}=0$.
 Thanks to \eqref{eq:21}, \eqref{eq:22} and \eqref{eq:23},

$\{e_t, e_l, e_t\}=\{e_t, \{e_i, e_j, e_k\}, e_t\}=-\{e_t, e_t, \{e_j, e_k, e_i\}\}=0$.

In the case $\dim A=6$, then we can suppose $e_1, e_2, e_3, e_4, e_5, e_6$ is  a basis of $A$, where $e_1=e_i$, $e_2=e_j$, $e_3=e_k$, and $e_4=e_l$. Then
$\{e_1, e_2, e_3\}=e_4.$

By the above discussion,

\begin{equation}
\{B, e_4, B\}=\{B, B, e_4\}=0,~~ \{C, e_4, C\}=\{C, C, e_4\}=0,
\label{eq:24}
\end{equation}
\\
where $B=\langle e_1, e_2, e_3, e_4, e_5\rangle$, $C=\langle e_1, e_2, e_3, e_4, e_6\rangle$. Therefore, we  only need to discuss the products
$\{e_5, e_4, e_6\}$,
$ \{e_6, e_4, e_5\}$ and $\{e_5, e_6, e_4\}$. Suppose
$$\{e_5, e_4, e_6\}=a_3e_1+b_3e_2+c_3e_3+d_3e_4+\lambda_3 e_5+\mu_3 e_6, ~~ a_3, b_3, c_3, \lambda_3, \mu_3\in \mathbb F. $$
Thanks to \eqref{eq:24},

$\lambda_3\{e_5, e_4, e_6\}=\{\{e_5, e_4, e_6\}-a_3e_1-b_3e_2-c_3e_3-d_3e_4-\mu_3 e_6, e_4, e_6\}=\{\{e_5, e_4, e_6\}, e_4, e_6\}$

\hspace{2.35cm}$=\{e_4, \{e_4, e_5, e_6\}, e_6\}=\{e_4, e_4, \{e_5, e_6, e_6\}\}=0.$
\\
Therefore, $\lambda_3\{e_5, e_4, e_6\}=\lambda_3 ae_1+\lambda_3 b_3e_2+\lambda_3c_3e_3+\lambda_3 d_3e_4+\lambda_3^2 e_5+\lambda_3 \mu-3\lambda_3 e_6=0$,
 $\lambda_3=0$.

  By the complete similar discussion,  we have $a_3=b_3=c_3=d_3=\mu_3=0$.

  Therefore,
$\{ e_5, e_4, e_6\}=\{ e_5, e_6, e_4\}$  $=\{e_6, e_4, e_5\}=0$, and
  $A^{1}\subseteq Z(A)$.
\end{proof}

\section{ Derivations and centroid of semi-associative 3-algebras}

\subsection{ Derivations of semi-associative 3-algebras}

\begin{defn}
 Let $A$ be a semi-associative 3-algebra, and $D: A\rightarrow A$ be a linear mapping. If $D$ satisfies that
\begin{equation}
D\{x_1, x_2, x_3\}=\{Dx_1, x_2, x_3\}+\{x_1, Dx_2, x_3\}+\{x_1, x_2, Dx_3\}, ~~\forall  x_1, x_2, x_3 \in A,
\label{eq:31}
\end{equation}
then $D$ is called a derivation of $A$. $Der(A)$ denotes the set of all derivations  of  $A$.
\label{defn:31}
\end{defn}

\begin{prop}
 $Der(A)$ is a subalgebra of the general linear algebra $gl(A)$.
 \label{prop:31}
 \end{prop}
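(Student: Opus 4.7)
The plan is to verify the two requirements for $Der(A)$ to be a subalgebra of the Lie algebra $gl(A)$: closure under linear combinations, and closure under the commutator bracket $[D_1,D_2]=D_1D_2-D_2D_1$. Linearity of the defining identity \eqref{eq:31} makes the first point immediate, since if $D_1,D_2\in Der(A)$ and $\alpha,\beta\in\mathbb F$ then expanding $(\alpha D_1+\beta D_2)\{x_1,x_2,x_3\}$ by linearity and applying the derivation identity to each summand recovers the derivation identity for $\alpha D_1+\beta D_2$. I would dispense with this in one line.

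The substantive step is closure under the commutator. I would take $D_1,D_2\in Der(A)$ and apply $D_1$ to the three‑term expansion of $D_2\{x_1,x_2,x_3\}$, producing nine terms: three of the form $\{D_1D_2x_i,\cdot,\cdot\}$ on the diagonal, and six ``mixed'' terms of the shape $\{D_2x_i,D_1x_j,x_k\}$ for $i\neq j$. Writing down the analogous expansion for $D_2D_1\{x_1,x_2,x_3\}$ gives an identical list with the roles of $D_1,D_2$ interchanged. Subtracting, the six mixed terms match in pairs and cancel (by symmetry of the shape in $D_1,D_2$), while the three diagonal terms combine to
\begin{equation*}
\{[D_1,D_2]x_1,x_2,x_3\}+\{x_1,[D_1,D_2]x_2,x_3\}+\{x_1,x_2,[D_1,D_2]x_3\}.
\end{equation*}
This is exactly the derivation identity \eqref{eq:31} applied to $[D_1,D_2]$, so $[D_1,D_2]\in Der(A)$.

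There is no serious obstacle: the argument is the standard one that works in any variety of algebras whose defining multiplication admits a Leibniz‑type derivation axiom, and it uses nothing specific to the identities \eqref{eq:21}--\eqref{eq:23} beyond the fact that they play no role in the verification at all (only \eqref{eq:31} is invoked). I would present the proof by carrying out the nine‑term expansion once, noting the cancellation of mixed terms, and concluding.
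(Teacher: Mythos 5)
Your proof is correct and is exactly the ``direct computation'' that the paper invokes without writing out: the nine-term expansion of $D_1D_2\{x_1,x_2,x_3\}$, cancellation of the six mixed terms against those of $D_2D_1\{x_1,x_2,x_3\}$, and the surviving diagonal terms giving the derivation identity for $[D_1,D_2]$. Nothing further is needed.
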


 \begin{proof}
 The result follows from a direct computation.
 \end{proof}

For $D\in Der(A)$, if $D$ satisfies $D(A)\subseteq Z(A)$ and $D(A^1)=0$, then $D$ is  called a {\it central derivation} of $A$.
And $Der_C(A)$ denotes the set of  central derivations. It is clear that $Der_C(A)$ is a subalgebra of $Der(A)$.

 For all $x_1, x_2\in A$, define linear mappings,

 $L(x_1, x_2),$ ~$ R(x_1, x_2), ~~ S(x_1, x_2): A \times A \longrightarrow A$,  by
\begin{equation}
L(x_1, x_2)(x)=\{x_1, x_2, x\},~ R(x_1, x_2)(x)=\{x, x_1, x_2\}, ~~ \forall x\in A,
\label{eq:32}
\end{equation}
\begin{equation}
 S(x_1, x_2)=L(x_1, x_2)-R(x_1, x_2).
 \label{eq:33}
\end{equation}
 $L(x_1, x_2)$  and $ R(x_1, x_2)$ are called the left multiplication and the right multiplication, respectively.

\begin{lemma} Let $A$ be a semi-associative 3-algebra. Then  for all $x_1, x_2, x_3, x_4\in A$,
\begin{equation}
L(x_1, x_2)=-L(x_2, x_1),
\label{eq:34}
\end{equation}
\begin{equation}L(x_1, x_2)L(x_3, x_4)=L(x_1, \{x_2, x_3, x_4\}),
\label{eq:35}
\end{equation}
\begin{equation}L(x_1, \{x_2, x_3, x_4\})=R(\{x_2, x_3, x_4\}, x_1)-R(x_1, x_2)R(x_3, x_4),
\label{eq:36}
\end{equation}
\begin{equation}
L(x_1, x_2)L(x_3, x_4)=L(x_3, x_1)L(x_2, x_4)=L(x_4, x_2)R(x_3, x_1)+L(x_3, x_1)R(x_4, x_2),
\label{eq:37}
\end{equation}
\begin{equation}
R(x_3,x_4)(R(x_1,x_2)+R(x_2, x_1))=0.
\label{eq:38}
\end{equation}
\label{lem:31}
\end{lemma}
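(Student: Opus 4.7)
The plan is to verify each of the five operator identities \eqref{eq:34}--\eqref{eq:38} by the same template: apply both sides to an arbitrary vector $x\in A$, unfold the definitions of $L$ and $R$ from \eqref{eq:32}--\eqref{eq:33}, and then manipulate the resulting nested $3$-brackets using only the three defining axioms \eqref{eq:21}, \eqref{eq:22}, \eqref{eq:23}.

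Three of the five identities are essentially immediate. Identity \eqref{eq:34} just reads \eqref{eq:21} through the definition of $L$. For \eqref{eq:35}, applying the left-hand side to $x$ gives $\{x_1,x_2,\{x_3,x_4,x\}\}$, which \eqref{eq:22} rewrites as $\{x_1,\{x_2,x_3,x_4\},x\}=L(x_1,\{x_2,x_3,x_4\})(x)$. For \eqref{eq:38}, I would take $R(x_1,x_2)y+R(x_2,x_1)y=\{y,x_1,x_2\}+\{y,x_2,x_1\}$, apply \eqref{eq:21} in the outer two slots, then \eqref{eq:22} to pull the inner brackets to the third slot; each summand becomes $-\{x_3,y,\{x_i,x_j,x_4\}\}$ and the sum vanishes because $\{x_1,x_2,x_4\}+\{x_2,x_1,x_4\}=0$ by \eqref{eq:21}.

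Identity \eqref{eq:36} is the first one to use axiom \eqref{eq:23}. Writing $L(x_1,\{x_2,x_3,x_4\})(x)=\{x_1,\{x_2,x_3,x_4\},x\}$ and applying \eqref{eq:23} with $x$ in the role of $x_5$ splits this into $\{x,\{x_2,x_3,x_4\},x_1\}+\{x_1,\{x,x_3,x_4\},x_2\}$; the first summand is exactly $R(\{x_2,x_3,x_4\},x_1)(x)$, and a single application of \eqref{eq:21} converts the second into $-\{\{x,x_3,x_4\},x_1,x_2\}=-R(x_1,x_2)R(x_3,x_4)(x)$. So \eqref{eq:36} is really \eqref{eq:23} in operator dress.

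Identity \eqref{eq:37} is the delicate one. For the first equality, I would use \eqref{eq:35} to reduce it to $L(x_1,\{x_2,x_3,x_4\})=L(x_3,\{x_1,x_2,x_4\})$, apply \eqref{eq:22} to push the inner bracket back to the third slot, and observe that the two expressions $\{x_1,x_2,\{x_3,x_4,x\}\}$ and $\{x_3,x_1,\{x_2,x_4,x\}\}$ are exchanged by a short chain of \eqref{eq:21}--\eqref{eq:22} swaps (skew-symmetry in the outer two slots combined with a single associativity move). The second equality is the main obstacle, because it mixes $L$'s and $R$'s. My strategy is to rewrite the left-hand side, via \eqref{eq:35} and then \eqref{eq:36}, as $R(\{x_1,x_2,x_4\},x_3)-R(x_3,x_1)R(x_2,x_4)$, and then reduce the statement to the single bracket identity
\[
\{x,\{x_1,x_2,x_4\},x_3\}+\{x_3,\{x,x_2,x_4\},x_1\}=\{x_4,\{x_2,x,x_3\},x_1\}+\{x_3,\{x_1,x,x_4\},x_2\}.
\]
I would attack this by applying \eqref{eq:23} to each term to unfold the nested brackets, interleaving \eqref{eq:21} in both outer and inner slots to align signs, and reusing the already established first equality of \eqref{eq:37} to replace $L(a,b)L(c,d)$ by $L(c,a)L(b,d)$ whenever that collapses two terms into one. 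The principal difficulty is combinatorial bookkeeping: the same five vectors appear in many reorderings, so keeping track of signs across successive swaps, rather than any deep idea, will be the bulk of the work.
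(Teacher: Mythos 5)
Your proposal is correct and follows essentially the same route as the paper: verify each operator identity pointwise on a test vector, with \eqref{eq:34}--\eqref{eq:36} read off directly from the axioms \eqref{eq:21}--\eqref{eq:23}, \eqref{eq:38} obtained by pushing the inner bracket into the third slot via \eqref{eq:22} so that the skew-symmetry \eqref{eq:21} forces cancellation, and \eqref{eq:37} handled by a chain of \eqref{eq:21}/\eqref{eq:22}/\eqref{eq:23} rewrites. Your reduction of the second equality of \eqref{eq:37} to a single bracket identity is accurate and that identity does follow by one application of \eqref{eq:23} plus even permutations of the first four slots of $\{a,b,\{c,d,e\}\}$; the paper itself only asserts this part with a terse ``similarly,'' so your level of detail matches or exceeds its own.
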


\begin{proof} Eqs \eqref{eq:34}, \eqref{eq:35} and  \eqref{eq:36}  follow from \eqref{eq:21}, \eqref{eq:22} and \eqref{eq:23}, directly.

  For $\forall x_i\in A$, $1\leq i\leq 5$, by \eqref{eq:22} and \eqref{eq:23}, we have

$R(x_3, x_4)R(x_1, x_2)(x_5)+R(x_3, x_4)R(x_2, x_1)(x_5)$

\vspace{2mm}\hspace{-3mm}$ =\{\{x_5, x_1, x_2\}, x_3, x_4\}+\{\{x_5, x_2, x_1\}, x_3, x_4\}$

\vspace{2mm}\hspace{-3mm}
$ =-\{x_3, \{x_5, x_1, x_2\}, x_4\}-\{x_3, \{x_5, x_2, x_1\}, x_4\}$

\vspace{2mm}\hspace{-3mm}$ =-\{x_3, \{x_5, x_1, x_2\}, x_4\}+\{x_3, \{x_5, x_1, x_2\}, x_4\} =0,$

Similarly, we have

 $L(x_1, x_2)L(x_3, x_4)(x_5)-L(x_3, x_1)L(x_2, x_4)(x_5)$

\vspace{2mm}\hspace{-3mm}$=\{x_1, x_2, \{x_3, x_4, x_5\}\}-\{x_3, x_1, \{x_2, x_4, x_5\}\} =0,$ and

$L(x_1, x_2)L(x_3, x_4)(x_5)-(L(x_4, x_2)R(x_3, x_1)(x_5)+L(x_3, x_1)R(x_4, x_2)(x_5))=0.$

It follows \eqref{eq:37} and \eqref{eq:38}.
\end{proof}

Let $L(A)$, $R(A)$ and $S(A)$ be subspaces of $End(A)$ spanned by linear mappings $L(x_1, x_2)$, $R(x, x_2)$ and $S(x_1, x_2)$, respectively, $\forall x_1, x_2\in A,$ that is,

 $$L(A)=\langle L(x_1, x_2)\mid \forall x_1, x_2\in A\rangle,~~R(A)=\langle R(x, x_2)\mid \forall x_1, x_2\in A\rangle,$$

$$S(A)=\langle S(x_1, x_2)\mid \forall x_1, x_2\in A\rangle, ~~ T(A)=L(A)+R(A).$$

 Then  $L(A)\subseteq T(A)$ and $R(A)\subseteq T(A).$

\begin{theorem} Let $A$  be a semi-associative 3-algebra. Then $T(A)$ is an abelian subalgebra of $gl(A)$.
\label{thm:31}
\end{theorem}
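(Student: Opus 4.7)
The plan is to observe that $T(A)$ being an abelian subalgebra of $gl(A)$ is equivalent to the single condition $[f,g]=0$ for every $f,g\in T(A)$; this simultaneously guarantees closure under the commutator bracket (since $0\in T(A)$) and commutativity. Writing $T(A)=L(A)+R(A)$, the task splits into verifying the three families $[L,L]=0$, $[L,R]=0$, and $[R,R]=0$, each parameterized by four vectors of $A$.

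For $[L(x_1,x_2),L(x_3,x_4)]=0$, I would apply the first equality of Eq.~\eqref{eq:37} together with the antisymmetry Eq.~\eqref{eq:34}: one obtains $L(x_1,x_2)L(x_3,x_4)=L(x_3,x_1)L(x_2,x_4)=L(x_1,x_3)L(x_4,x_2)$ and, by the same identity applied to the reversed product, $L(x_3,x_4)L(x_1,x_2)=L(x_1,x_3)L(x_4,x_2)$, so the two sides agree. For $[L(x_1,x_2),R(x_3,x_4)]=0$, I would evaluate both compositions on an arbitrary $y\in A$, use Eqs.~\eqref{eq:21}--\eqref{eq:22} to bring the nested brackets into comparable form, and apply Eq.~\eqref{eq:37} once to reduce the identity to $L(x_1,y)L(x_2,x_3)(x_4)=L(x_2,x_3)L(x_1,y)(x_4)$, which is a special case of the $[L,L]=0$ already established.

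The main obstacle is $[R(x_1,x_2),R(x_3,x_4)]=0$. Expanding directly via Eqs.~\eqref{eq:21}--\eqref{eq:22} yields $R(x_1,x_2)R(x_3,x_4)(y)=\{y,x_1,\{x_3,x_4,x_2\}\}$ and its analogue with the two pairs swapped, so one application of Eq.~\eqref{eq:37} rewrites the commutator on $y$ as $-\{y,x_3,\{x_1,x_4,x_2\}+\{x_1,x_2,x_4\}\}$. The inner sum is not zero term-by-term, so the antisymmetry Eq.~\eqref{eq:21} alone does not suffice. My plan is to rewrite each summand via Eq.~\eqref{eq:22}, apply Eq.~\eqref{eq:23} to each of the two resulting terms to obtain a four-piece expression, and check that two of these pieces cancel via a further pass through Eqs.~\eqref{eq:21}--\eqref{eq:22}; the remaining two, after invoking the already-established $[L,L]=0$ to commute the $L$-factors, reduce by Eqs.~\eqref{eq:21}--\eqref{eq:22} to a difference whose inner bracket is visibly zero by antisymmetry. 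This careful orchestration of all three defining axioms \eqref{eq:21}--\eqref{eq:23} with the operator identities of Lemma~\ref{lem:31} is the technical heart of the argument; everything else is routine bookkeeping.
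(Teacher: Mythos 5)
Your proposal is correct and follows essentially the same route as the paper: it splits the claim into the three families $[L,L]=[L,R]=[R,R]=0$ and verifies each by moving arguments into the nested bracket via \eqref{eq:21}--\eqref{eq:23}, with the delicate $[R,R]$ case handled by one pass of \eqref{eq:23} followed by pairwise cancellation of the resulting four terms. The only cosmetic difference is that you invoke the operator identity \eqref{eq:37} of Lemma \ref{lem:31} where the paper redoes the corresponding bracket manipulations inline; the computations I checked (including the reduction of $[R(x_1,x_2),R(x_3,x_4)](y)$ to $-\{y,x_3,\{x_1,x_4,x_2\}+\{x_1,x_2,x_4\}\}$ and its subsequent cancellation) all go through.
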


\begin{proof} By \eqref{eq:21} and \eqref{eq:22},  for all $x_i\in A$, $1\leq i\leq 5,$

$
[L(x_1, x_2), L(x_3, x_4)](x_5)=(L(x_1, x_2)L(x_3, x_4)-L(x_3, x_4)L(x_1, x_2))(x_5)$

\vspace{2mm}\hspace{-3mm}$=\{x_1, \{x_2, x_3, x_4\}, x_5\}- \{x_3, \{x_4, x_1, x_2\}, x_5\}
$

\vspace{2mm}\hspace{-3mm}$=\{x_1, x_3, \{x_4, x_2, x_5\}\}- \{x_3, \{x_4, x_1, x_2\}, x_5\}
$

\vspace{2mm}\hspace{-3mm}$=(\{x_3, \{x_4, x_1, x_2\}, x_5\}- \{x_3, \{x_4, x_1, x_2\}, x_5\}=0.
$

Similarly, $
[L(x_1, x_2), R(x_3, x_4)]x_5=0.$ Thanks  to \eqref{eq:23},

$
[R(x_1, x_2), R(x_3, x_4)](x_5)=(R(x_1, x_2)R(x_3, x_4)-R(x_3, x_4)R(x_1, x_2))(x_5)$

\vspace{2mm}\hspace{-3mm}$=\{\{x_5, x_3, x_4\}, x_1, x_2\}-\{\{x_5, x_1, x_2\}, x_3, x_4\}
$

\vspace{2mm}\hspace{-3mm}$=-\{x_1, \{x_5, x_3, x_4\}, x_2\}+\{x_3, \{x_5, x_1, x_2\}, x_4\}
$

\vspace{2mm}\hspace{-3mm}$=-\{x_1, \{x_5, x_3, x_4\}, x_2\}+\{x_5, \{x_1, x_3, x_2\}, x_4\}
$

\vspace{2mm}\hspace{-3mm}$=-\{x_2, \{x_5, x_3, x_4\}, x_1\}-\{x_1, \{x_2, x_3, x_4\}, x_5\}
$
$+\{x_4, \{x_1, x_3, x_2\}, x_5\}$

\vspace{2mm}\hspace{-3mm}$+\{x_5, \{x_4, x_3, x_2\}, x_1\}=0. $

Therefore,
$[L(A), T(A)]=[R(A), T(A)]=0, [T(A), T(A)]=0.$ The proof is complete.
\end{proof}

\begin{theorem} Let $A$ be a semi-associative 3-algebra. Then

1) for any $ x_1, x_2\in A$,  $S(x_1, x_2)$ is a derivation of $A$;

2) $S(A)$ is an ideal of $Der(A)$, and $[S(A), S(A)]=0$.

\label{thm:32}
\end{theorem}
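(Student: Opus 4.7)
My plan is to handle both parts by exploiting the operator identities of Lemma \ref{lem:31} and the commutativity of $T(A)$ from Theorem \ref{thm:31}. For part (1), I would verify the Leibniz rule \eqref{eq:31} for $D=S(x_1,x_2)$ by expanding $S=L-R$ on each side. On the right-hand side, the outer terms collapse cleanly: using $[R(y_2,y_3),L(x_1,x_2)]=[R(y_2,y_3),R(x_1,x_2)]=0$, the term $\{S(x_1,x_2)y_1,y_2,y_3\}$ reduces to $S(x_1,x_2)\{y_1,y_2,y_3\}$, and analogously $\{y_1,y_2,S(x_1,x_2)y_3\}$ reduces the same way via $[L(y_1,y_2),L(x_1,x_2)]=[L(y_1,y_2),R(x_1,x_2)]=0$. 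For the middle term $\{y_1,S(x_1,x_2)y_2,y_3\}$, applying \eqref{eq:22} to each of $\{y_1,\{x_1,x_2,y_2\},y_3\}$ and $\{y_1,\{y_2,x_1,x_2\},y_3\}$ yields $\{y_1,x_1,\{x_2,y_2,y_3\}\}$ and $\{y_1,y_2,\{x_1,x_2,y_3\}\}$ respectively; the swap identity \eqref{eq:37} combined with \eqref{eq:21} shows these coincide, so the middle term vanishes. Collecting the pieces matches both sides of \eqref{eq:31}.

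For part (2), the relation $[S(A),S(A)]=0$ is immediate, since $S(x_1,x_2)=L(x_1,x_2)-R(x_1,x_2)\in T(A)$ and $T(A)$ is abelian by Theorem \ref{thm:31}. To show $S(A)$ is an ideal of $\mathrm{Der}(A)$, fix $D\in\mathrm{Der}(A)$ and $x_1,x_2,y\in A$, and compute
$$[D,S(x_1,x_2)](y)=D\{x_1,x_2,y\}-D\{y,x_1,x_2\}-\{x_1,x_2,Dy\}+\{Dy,x_1,x_2\}.$$
Applying Leibniz for $D$ to each 3-bracket on the right, the pair $\pm\{x_1,x_2,Dy\}$ and the pair $\pm\{Dy,x_1,x_2\}$ cancel, and the four surviving contributions regroup as $S(Dx_1,x_2)(y)+S(x_1,Dx_2)(y)$, giving
$$[D,S(x_1,x_2)]=S(Dx_1,x_2)+S(x_1,Dx_2)\in S(A).$$

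The main obstacle will be the bookkeeping in part (1): after $S$ is unfolded into $L-R$, six distinct summands appear on the right-hand side of \eqref{eq:31}, and the matching or cancellation of each pair requires invoking precisely the right operator commutation from Theorem \ref{thm:31}. The most delicate step is the vanishing of the middle term $\{y_1,S(x_1,x_2)y_2,y_3\}$, which depends essentially on \eqref{eq:37} together with careful sign tracking via \eqref{eq:21}. Part (2), by contrast, is essentially a formal consequence of the Leibniz rule for $D$ and the abelianness of $T(A)$.
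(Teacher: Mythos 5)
Your part 2) is correct and is essentially the paper's own argument: expanding $[D,S(x_1,x_2)](y)$ with the Leibniz rule for $D$ yields $[D,S(x_1,x_2)]=S(Dx_1,x_2)+S(x_1,Dx_2)\in S(A)$, and $[S(A),S(A)]=0$ follows from $S(A)\subseteq T(A)$ together with Theorem \ref{thm:31}.

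Part 1), however, contains a genuine gap in the final accounting. Grant every operator identity you invoke. Commuting $R(y_2,y_3)$ past both $L(x_1,x_2)$ and $R(x_1,x_2)$ shows that the first summand $\{S(x_1,x_2)y_1,y_2,y_3\}=R(y_2,y_3)S(x_1,x_2)(y_1)$ equals the \emph{entire} left-hand side $S(x_1,x_2)\{y_1,y_2,y_3\}$; exactly the same is true of the third summand $\{y_1,y_2,S(x_1,x_2)y_3\}=L(y_1,y_2)S(x_1,x_2)(y_3)$. Since your middle summand vanishes, the right-hand side of \eqref{eq:31} collects to $2\,S(x_1,x_2)\{y_1,y_2,y_3\}$, not to $S(x_1,x_2)\{y_1,y_2,y_3\}$: the pieces do not match the two sides. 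Your argument would only close if one knew in addition that $S(x_1,x_2)$ annihilates $A^1$, which you have not established (and which would make the claim hold trivially with both sides zero). The underlying issue is that the cancellations in a correct verification are not slot-by-slot: in the paper's direct computation the term $-\{x,\{y,x_1,x_2\},z\}$ produced by the middle slot cancels the term $\{x,y,\{x_1,x_2,z\}\}$ produced by the last slot (via \eqref{eq:22}), and the remaining four terms are reorganized with \eqref{eq:21}--\eqref{eq:23} into $\{x_1,\{x_2,x,y\},z\}+\{x_1,\{x,y,z\},x_2\}$, which is precisely what $S(x_1,x_2)\{x,y,z\}$ expands to. Reducing each of the three Leibniz summands independently, as you do, double-counts the left-hand side; you should either redo part 1) by the direct expansion or supply a separate proof that $S(x_1,x_2)(A^1)=0$.
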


\begin{proof} For $\forall x_1, x_2, x, y, z\in A$, by \eqref{eq:32} and \eqref{eq:33},

$S(x_1, x_2)\{x, y, z\}$$=\{x_1, x_2, \{x, y, z\}\}-\{\{x, y, z\}, x_1,
x_2\}$

\vspace{2mm}\hspace{-3mm}
$=\{x_1, \{x_2, x, y\}, z\}+\{x_1, \{x, y, z\}, x_2\}, $

\vspace{2mm}$\{S(x_1, x_2)(x), y, z\}+\{x, S(x_1, x_2)(y), z\}+\{x, y, S(x_1, x_2)(z)\}$

\vspace{2mm}\hspace{-3mm}
$=\{\{x_1, x_2, x\}, y, z\}-\{\{x, x_1, x_2\}, y, z\}+\{x, \{x_1, x_2, y\}, z\}$

\vspace{2mm}\hspace{-3mm}
$-\{x, \{y, x_1, x_2\}, z\}+\{x, y, \{x_1, x_2, z\}\}-\{x, y, \{z, x_1, x_2\}\}$

\vspace{2mm}\hspace{-3mm}
$=\{x_1, \{x_2, x, y\}, z\}+\{x_1, \{x, y, z\}, x_2\}. $
Therefore,

$S(x_1, x_2)\{x, y, z\}=\{S(x_1, x_2)(x), y, z\}+\{x, S(x_1, x_2)(y), z\}+\{x, y, S(x_1, x_2)(z)\}.$
\\The result 1) follows.

For any $S(x_1, x_2)\in S(A), D\in Der(A)$, and $x\in A$, since

$[S(x_1, x_2), D ](x)$
$=S(x_1, x_2)D(x)-DS(x_1, x_2)(x)$

\vspace{2mm}\hspace{-3mm}
$=\{x_1, x_2, D(x)\}-\{D(x), x_1, x_2\}-D\{x_1, x_2, x\}+D\{x, x_1, x_2\}$

\vspace{2mm}\hspace{-3mm}
$=\{x_1, x_2, D(x)\}-\{D(x),x_1, x_2\}-\{D(x_1), x_2, x\}-\{x_1, D(x_2), x\}$

\vspace{2mm}\hspace{-3mm}
$-\{x_1, x_2, D(x)\}+\{D(x),x_1, x_2\}+\{x, D(x_1), x_2\}+\{x, x_1, D(x_2)\}$

\vspace{2mm}\hspace{-3mm}
$=(S\{x_1, D(x_2)\}-S\{D(x_1), x_2\})(x)$.
\\ Then $[S(A), Der(A)]\subseteq S(A)$, that is, $S(A)$ is an ideal of $Der(A)$.
Thanks to Theorem \ref{thm:31}, $[S(A), S(A)]=[L(A)-R(A), L(A)-R(A)]=0. $
It follows 2).
\end{proof}

For any $x_1, x_2\in A$, $S(x_1, x_2) $  is called an inner derivation of $A$.

\subsection{The centroid  of  semi-associative 3-algebras}

\begin{defn}~ Let $A$ be a semi-associative 3-algebra. The vector space
$$
\Gamma(A)=\{\varphi \in End(A)| \varphi\{x_1, x_2, x_3\}=\{\varphi(x_1), x_2, x_3\}=\{x_1, x_2, \varphi(x_3)\}, \forall x_1, x_2, x_3\in A \}
$$
is called the centroid of $A$.
\label{defn:61}
\end{defn}

For any $\varphi\in End(A)$, by \eqref{eq:21},  $\varphi\in \Gamma(A) $ if and only if for all $x_1, x_2, x_3\in A$,

$$\varphi\{x_1, x_2, x_3\}=\{\varphi(x_1), x_2, x_3\}=\{x_1, \varphi(x_2), x_3\}=\{x_1, x_2, \varphi(x_3)\}.$$

\begin{theorem}~ Let $A$ be a semi-associative 3-algebra. Then

1) $\Gamma(A)$ is a subalgebra of the general linear Lie algebra $gl(A)$.

2) For any $\varphi\in \Gamma(A)$, if $\varphi(A)\subseteq Z(A)$ and $\varphi(A^1)=0$, then $\varphi$ is a central derivation.

3) For any $\varphi\in\Gamma(A)$, and $D\in Der(A)$. Then $\varphi D\in Der(A)$.

4) $Der_C(A)=\Gamma(A)\cap Der(A)$.

\label{thm:61}
\end{theorem}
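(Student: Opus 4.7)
The plan is to dispatch all four parts by direct manipulation, with only part (4) requiring any real insight.

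For (1), I would verify that $\Gamma(A)$ is a subspace by linearity and is closed under the commutator: given $\varphi_1,\varphi_2\in\Gamma(A)$, apply the centroid property twice to compute
\[
\varphi_1\varphi_2\{x_1,x_2,x_3\}=\varphi_1\{\varphi_2(x_1),x_2,x_3\}=\{\varphi_1\varphi_2(x_1),x_2,x_3\},
\]
subtract the $\varphi_2\varphi_1$ version, and conclude that $[\varphi_1,\varphi_2]$ satisfies the first-slot centroid identity. The corresponding identities in the other two slots follow from the three-slot reformulation recorded just after Definition~\ref{defn:61} (which uses \eqref{eq:21}).

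For (2), I would simply observe that the derivation identity
\[
\varphi\{x_1,x_2,x_3\}=\{\varphi(x_1),x_2,x_3\}+\{x_1,\varphi(x_2),x_3\}+\{x_1,x_2,\varphi(x_3)\}
\]
is satisfied trivially: the left side vanishes because $\{x_1,x_2,x_3\}\in A^1$ and $\varphi(A^1)=0$, while each term on the right vanishes because $\varphi(x_i)\in Z(A)$ (the middle-slot case is reduced to a first-slot one by \eqref{eq:21}). Thus $\varphi\in Der(A)$, and the central-derivation conditions are part of the hypothesis. For (3), I would apply the derivation identity for $D$ and then move $\varphi$ inside each resulting $3$-product by the centroid property, producing the derivation identity for $\varphi D$.

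Part (4) is the only substantive step. The inclusion $Der_C(A)\subseteq\Gamma(A)\cap Der(A)$ is a direct consequence of the argument in (2): for $D\in Der_C(A)$ all three sides of each centroid equality vanish. For the reverse inclusion, given $\varphi\in\Gamma(A)\cap Der(A)$, the plan is to combine both identities. The centroid property (in its three-slot form) gives
\[
\varphi\{x_1,x_2,x_3\}=\{\varphi(x_1),x_2,x_3\}=\{x_1,\varphi(x_2),x_3\}=\{x_1,x_2,\varphi(x_3)\};
\]
summing the three right-hand sides and invoking the derivation identity for $\varphi$ yields $\varphi\{x_1,x_2,x_3\}=3\varphi\{x_1,x_2,x_3\}$, forcing $\varphi(A^1)=0$ since $\mathrm{char}\,\mathbb{F}=0$. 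Feeding this back into the centroid identity gives $\{\varphi(x_1),A,A\}=0=\{A,A,\varphi(x_3)\}$, which is exactly $\varphi(A)\subseteq Z(A)$. The main (and essentially only) obstacle is this characteristic-zero cancellation; everything else reduces to mechanical symbol-pushing through the centroid and derivation identities.
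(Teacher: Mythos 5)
Your proposal is correct and follows essentially the same route as the paper's proof: commutator closure via the slot-wise centroid identities for (1), the trivial vanishing argument for (2), composing the derivation identity with the centroid property for (3), and the $\varphi\{x_1,x_2,x_3\}=3\varphi\{x_1,x_2,x_3\}$ cancellation (valid in characteristic zero) to get $\varphi(A^1)=0$ and then $\varphi(A)\subseteq Z(A)$ for (4). Your write-up is in fact slightly more explicit than the paper's in parts (2) and in the easy inclusion $Der_C(A)\subseteq\Gamma(A)\cap Der(A)$, but there is no substantive difference.
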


\begin{proof} For all $\varphi_1, \varphi_2\in \Gamma(A)$, and  $x_1, x_2, x_3\in A$, by Definition \ref{defn:61},

$[\varphi_1, \varphi_2]\{x_1, x_2, x_3\}=(\varphi_1\varphi_2-\varphi_2\varphi_1)\{x_1, x_2, x_3\}$

\vspace{2mm}\hspace{-3mm}$=\{\varphi_1\varphi_2(x_1), x_2, x_3\}-\{\varphi_2\varphi_1(x_1), x_2, x_3\}$
$=\{[\varphi_1, \varphi_2](x_1), x_2, x_3\}$,

\vspace{2mm}$[\varphi_1, \varphi_2]\{x_1, x_2, x_3\}=(\varphi_1\varphi_2-\varphi_2\varphi_1)\{x_1, x_2, x_3\}$

\vspace{2mm}\hspace{-3mm}$=\{x_1, x_2, \varphi_1\varphi_2(x_3)\}-\{x_1, x_2, \varphi_2\varphi_1(x_3)\}$
$=\{x_1, x_2, [\varphi_1, \varphi_2](x_3)\}$.
\\
Therefore,  $[\varphi_1, \varphi_2]\in gl(A)$, the result 1) follows.

 For any $\varphi\in \Gamma(A)$, if $\varphi(A)\subseteq Z(A)$ and $\varphi(A^1)=0$, then by Definition \ref{defn:61} and \eqref{eq:31}, $\varphi\in Der A$. It follows 2).

For any  $\varphi\in \Gamma(A)$, $D\in Der A$, and $x_1, x_2, x_3\in A$,

 $\varphi D\{x_1, x_2, x_3\}=\varphi\{D(x_1), x_2, x_3\}+\varphi\{x_1, D(x_2), x_3\}+\varphi\{x_1, x_2, D(x_3)\}$

~~~~~~~~~~~~~~~~~~~$=\{\varphi D(x_1), x_2, x_3\}+\{x_1, \varphi D(x_2), x_3\}+\{x_1, x_2, \varphi D(x_3)\}$,
\\ $\varphi D\in Der A$.  It follows  3).

For any $\varphi\in \Gamma(A)\cap Der(A)$, by Definition \ref{defn:61}, and \eqref{eq:31},
$\forall x_1, x_2, x_3 \in A$,  we have

$\varphi\{x_1, x_2, x_3\}=\{\varphi(x_1), x_2,  x_3\}+\{x_1,
\varphi(x_2), x_3\}+\{x_1, x_2, \varphi(x_3)\}=3\varphi\{x_1, x_2, x_3\}$.
 \\Therefore,  $\varphi(A^1)=0$. By $\varphi\{x_1, x_2, x_3\}=\{\varphi(x_1), x_2,  x_3\}=\{ x_1,  x_2, \varphi(x_3)\}=0$, we have  $\varphi(A)\subseteq Z(A)$, it follows $\Gamma(A)\cap Der(A)\subseteq Der_C(A)$.

It is clear that if $\varphi\in Der_C(A)$, then $\varphi\in \Gamma(A) $. This implies $Der_C(A)=\Gamma(A)\cap
Der(A)$.
\end{proof}

\begin{theorem}~Let $A$ be a semi-associative 3-algebra. Then  $\forall D \in Der(A), \varphi\in \Gamma(A)$

1) $[D, \varphi]\subseteq \Gamma(A)$,~~
2) $D\varphi\in \Gamma(A)$ if and only if $\varphi D\in Der_C(A)$,

3) $D\varphi\in Der(A)$ if and only if $[D, \varphi]\in Der_C(A)$.

\label{thm:62}
\end{theorem}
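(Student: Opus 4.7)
The plan is to prove part (1) by a direct calculation using the Leibniz rule \eqref{eq:31} for $D$ together with the slot-commutativity of $\varphi$, and then to deduce (2) and (3) formally from (1) using that $\Gamma(A)$ and $Der(A)$ are linear subspaces of $gl(A)$, that $\varphi D \in Der(A)$ by Theorem \ref{thm:61}(3), and that $Der_C(A) = \Gamma(A) \cap Der(A)$ by Theorem \ref{thm:61}(4).

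For (1), I would check that $[D,\varphi]$ commutes with the bracket slot by slot. Writing $D\varphi\{x_1,x_2,x_3\} = D\{\varphi x_1, x_2, x_3\}$ and expanding by \eqref{eq:31}, and writing $\varphi D\{x_1,x_2,x_3\}$ by first expanding $D\{x_1,x_2,x_3\}$ via \eqref{eq:31} and then pulling $\varphi$ through the first slot, one obtains
\begin{align*}
D\varphi\{x_1,x_2,x_3\} &= \{D\varphi x_1, x_2, x_3\} + \{\varphi x_1, D x_2, x_3\} + \{\varphi x_1, x_2, D x_3\},\\
\varphi D\{x_1,x_2,x_3\} &= \{\varphi D x_1, x_2, x_3\} + \{\varphi x_1, D x_2, x_3\} + \{\varphi x_1, x_2, D x_3\}.
\end{align*}
Subtracting gives $[D,\varphi]\{x_1,x_2,x_3\} = \{[D,\varphi] x_1, x_2, x_3\}$. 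Repeating the computation with $\varphi$ pulled through the third slot yields the identity for the third slot, and \eqref{eq:21} then forces the second-slot identity; hence $[D,\varphi] \in \Gamma(A)$.

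For parts (2) and (3), I would exploit the identities $\varphi D = D\varphi - [D,\varphi]$ and $D\varphi = \varphi D + [D,\varphi]$. For (2): if $D\varphi \in \Gamma(A)$, then since $[D,\varphi] \in \Gamma(A)$ by (1), also $\varphi D \in \Gamma(A)$, and combined with $\varphi D \in Der(A)$ from Theorem \ref{thm:61}(3) this gives $\varphi D \in Der_C(A)$; conversely, $\varphi D \in Der_C(A) \subseteq \Gamma(A)$ yields $D\varphi = \varphi D + [D,\varphi] \in \Gamma(A)$. For (3): if $D\varphi \in Der(A)$, then $[D,\varphi] = D\varphi - \varphi D \in Der(A)$ (using $\varphi D \in Der(A)$), and together with $[D,\varphi] \in \Gamma(A)$ from (1) this gives $[D,\varphi] \in Der_C(A)$; conversely, $[D,\varphi] \in Der_C(A) \subseteq Der(A)$ together with $\varphi D \in Der(A)$ forces $D\varphi \in Der(A)$. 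The only genuine computation is in (1), which is routine; I do not anticipate any substantive obstacle, since (2) and (3) are purely formal consequences of (1) and Theorem \ref{thm:61}.
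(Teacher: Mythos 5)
Your proposal is correct and follows essentially the same route as the paper: part (1) by expanding $D\varphi$ and $\varphi D$ on a bracket via the Leibniz rule \eqref{eq:31} and the slot-commutation of $\varphi$, then (2) and (3) as formal consequences of (1), the identity $D\varphi=\varphi D+[D,\varphi]$, and Theorem \ref{thm:61} (3)--(4). Your write-up of part (3) is in fact more explicit than the paper's one-line ``follows from 1) and 2)''.
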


{\bf Proof} For any $D\in Der(A), \varphi\in \Gamma(A)$, and $x_1, x_2, x_3\in A$, since

$D\varphi\{x_1, x_2, x_3\}=D\{\varphi(x_1), x_2, x_3\}$

\vspace{2mm}\hspace{-3mm}$=\{D\varphi(x_1), x_2, x_3\}+\{\varphi(x_1), D(x_2), x_3\}+\{\varphi(x_1), x_2, D(x_3)\}$

\vspace{2mm}\hspace{-3mm}$=\{D\varphi(x_1), x_2, x_3\}+\varphi D\{x_1, x_2, x_3\}-\{\varphi D(x_1), x_2, x_3\}$, and

\vspace{2mm}$D\varphi\{x_1, x_2, x_3\}=D\{x_1, x_2, \varphi(x_3)\}$

\vspace{2mm}\hspace{-3mm}$=\{D(x_1), x_2, \varphi(x_3)\}+\{x_1, D(x_2), \varphi(x_3)\}+\{x_1, x_2, D\varphi(x_3)\}$

\vspace{2mm}\hspace{-3mm}$=\varphi D\{x_1, x_2, x_3\}-\{x_1, x_2, \varphi
D(x_3)\}+\{x_1, x_2, D\varphi(x_3)\}$,
 \\ we have $(D\varphi-\varphi
D)\{x_1, x_2, x_3\}=\{(D\varphi-\varphi D)(x_1), x_2, x_3 \}$, and

$(D\varphi-\varphi
D)\{x_1, x_2, x_3\}=\{x_1, x_2, (D\varphi-\varphi D\}(x_3)\}$. Therefore, $[D, \varphi]\in \Gamma(A)$.

Thanks to  Theorem \ref{thm:61}, $\varphi D\in Der A$. If $D\varphi\in  \Gamma(A)$, then by $[D, \varphi]\in \Gamma(A)$, we have   $\varphi D\in \Gamma (A)$.  Therefore,
$\varphi D\in Der(A)\cap \Gamma(A)$.
Conversely, if $\varphi D\in \Gamma (A)$, then by $[D, \varphi]\in \Gamma(A)$, we have $D\varphi \in \Gamma(A)$.
We get 2). The result 3) follows from 1) and 2).

\section{Sub-adjacent  $3$-Lie algebras  and double modules  of  semi-associative 3-algebras}

\subsection{Sub-adjacent  $3$-Lie algebras of semi-associative 3-algebras}

3-Lie algebras have close relationships  with Lie algebras, pre-Lie algebras, associative algebras, commutative associative algebras and et al. Now we discuss
the relation between 3-Lie algebras with  semi-associative 3-algebras.

A $3$-Lie algebra $(L, [ , , ])$ is a vector space $L$  with a linear multiplication $[ , , ]: L\wedge L\wedge L\rightarrow L$ which satisfies that  for all $x_i\in L$, $1\leq i\leq 5,$
\begin{equation}
[[x_1, x_2, x_3], x_4, x_5]=[[x_1, x_4, x_5], x_2, x_3]+[x_1, [ x_2, x_4, x_5], x_3]+[x_1, x_2, [x_3, x_4, x_5]].
\label{eq:41}
\end{equation}

\begin{theorem} Let $A$ be a semi-association $3$-algebra. Then $(A, [ , , ]_c)$ is a $3$-Lie algebra, where for all $x_1, x_2, x_3\in A,$
\begin{equation}
[x_1, x_2, x_3]_c=\{x_1, x_2, x_3\}+\{x_2, x_3, x_1\}+\{x_3, x_1, x_2\}.
 \label{eq:42}
\end{equation}

\end{theorem}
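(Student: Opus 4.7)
The plan has two parts: first show that $[\cdot,\cdot,\cdot]_c$ is completely antisymmetric, then verify the Filippov identity \eqref{eq:41} for this bracket.

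\textbf{Antisymmetry.} Applying \eqref{eq:21} inside each of the three cyclic summands that define $[x_1,x_2,x_3]_c$ shows quickly that every transposition of two arguments flips the sign of the whole expression. For instance, $[x_2,x_1,x_3]_c = \{x_2,x_1,x_3\} + \{x_1,x_3,x_2\} + \{x_3,x_2,x_1\}$, which by three applications of \eqref{eq:21} equals $-\{x_1,x_2,x_3\} - \{x_3,x_1,x_2\} - \{x_2,x_3,x_1\} = -[x_1,x_2,x_3]_c$. The two other transpositions $(x_2 \leftrightarrow x_3)$ and $(x_1 \leftrightarrow x_3)$ are handled identically, so $[\cdot,\cdot,\cdot]_c$ is fully alternating.

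\textbf{Filippov identity.} I would expand each side of the analogue of \eqref{eq:41} for $[\cdot,\cdot,\cdot]_c$ using the definition \eqref{eq:42}. This produces $9$ nested $\{\cdot,\cdot,\cdot\}$-products on the left and $27$ on the right; each nested product has one of the shapes $\{\ast,\{\ast,\ast,\ast\},\ast\}$, $\{\{\ast,\ast,\ast\},\ast,\ast\}$, or $\{\ast,\ast,\{\ast,\ast,\ast\}\}$. Using \eqref{eq:21} to swap the first two slots of any bracket (at the cost of a sign) and \eqref{eq:22} to slide the inner bracket between the middle slot and the rightmost slot, every term can be normalized to the canonical shape $\{\ast,\{\ast,\ast,\ast\},\ast\}$. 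After normalization I would group terms by their inner triple and apply \eqref{eq:23}---which controls exactly the asymmetry between the two outer arguments of $\{p,\{q,r,s\},t\}$ under $p \leftrightarrow t$---to exhibit the remaining cancellations.

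The main obstacle is the combinatorial bookkeeping: roughly $36$ nested products must be tracked and matched, and \eqref{eq:23} has to be applied in the correctly chosen group for each matching to cancel. A systematic organization---indexing the normalized terms by the inner triple $(q,r,s)$ and by which pair among the five arguments $x_1,\ldots,x_5$ occupies the outer slots---makes the matching transparent. The calculation is then a patient expansion rather than a clever observation; the three defining identities \eqref{eq:21}--\eqref{eq:23} are exactly sufficient to close the verification, confirming that $(A,[\cdot,\cdot,\cdot]_c)$ is a $3$-Lie algebra.
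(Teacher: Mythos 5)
Your overall strategy is the paper's own: antisymmetry of $[\,,\,,\,]_c$ is read off from \eqref{eq:21} applied term by term to the cyclic sum, and the Filippov identity is checked by brute-force expansion of all nested products followed by cancellation using \eqref{eq:21}--\eqref{eq:23}. Your antisymmetry argument is complete and correct: since the bracket is a cyclic sum, one transposition suffices, and \eqref{eq:21} inside each summand handles it exactly as you write.

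The gap is in the second half: you have described a computation rather than performed one, and that computation is the entire content of the theorem. The paper's proof is precisely the $9+27$-term expansion you propose, carried through to an explicit cancellation; your proposal stops at the assertion that the three defining identities ``are exactly sufficient to close the verification,'' which is the very thing to be proved. The organizational scheme you sketch is also slightly too optimistic. You cannot simply group the normalized terms by their inner triple and cancel within each group, because \eqref{eq:23} does not preserve the inner triple: it rewrites $\{x_1,\{x_2,x_3,x_4\},x_5\}$ as $\{x_5,\{x_2,x_3,x_4\},x_1\}+\{x_1,\{x_5,x_3,x_4\},x_2\}$, whose second term carries a different inner triple; likewise \eqref{eq:22}, used to move an inner bracket from the third slot to the middle slot, changes which three arguments form the inner triple. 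So the matching genuinely interleaves your groups, and until the roughly $36$ terms are actually tracked to zero (as the paper does in its displayed computation), the verification is not complete. The plan is sound and would succeed, but as written it is a road map, not a proof.
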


\begin{proof} By \eqref{eq:21}, \eqref{eq:22} and \eqref{eq:23}, the multiplication  $[ , , ]_C$ given by
\eqref{eq:42} is skew-symmetric,  and for all $x_i\in A, 1\leq i\leq 5$,

$[[x_1, x_2, x_3]_c, x_4, x_5]_c-[[x_1, x_4, x_5]_c, x_2, x_3]_c-[x_1, [x_2, x_4, x_5]_c,  x_3]_c$

$-[x_1, x_2, [x_3, x_4, x_5]_c]_c$

\vspace{2mm}\noindent
$=\{\{x_1, x_2, x_3\}, x_4, x_5\}+\{\{x_2, x_3, x_1\}, x_4, x_5\}+\{\{x_3,
x_1, x_2\}, x_4, x_5\}$
$+\{x_4, x_5, \{x_1, x_2, x_3\}\}$

\vspace{2mm}\noindent
$+\{x_4, x_5, \{x_2, x_3, x_1\}\}+\{x_4, x_5,
\{x_3, x_1, x_2\}\}$
$+\{x_5, \{x_1, x_2, x_3\}, x_4\}$
$+\{x_5, \{x_2, x_3, x_1\}, x_4\}$

\vspace{2mm}\noindent$+\{x_5, x_3,
x_1, x_2\}, x_4\}$
$-\{\{x_1, x_4, x_5\}, x_2, x_3\}-\{\{x_4, x_5, x_1\}, x_2, x_3\}$
$-\{\{x_5,
x_1, x_4\}, x_2, x_3\}$

\vspace{2mm}\noindent$-\{x_2, x_3, \{x_1, x_4, x_5\}\}$
$-\{x_2, x_3, \{x_4, x_5, x_1\}\}-\{x_2, x_3,
\{x_5, x_1, x_4\}\}$
$-\{x_3, \{x_1, x_4, x_5\}, x_2\}$

\vspace{2mm}\noindent$-\{x_3, \{x_4, x_5, x_1\}, x_2\}$
$-\{x_3,
\{x_5, x_1, x_4\}, x_2\}$
$-\{x_1, \{x_2, x_4, x_5\}, x_3\}$
$-\{x_1, \{x_4, x_5, x_2\}, x_3\}$

\vspace{2mm}\noindent$-\{x_1,
\{x_5, x_2, x_4\}, x_3\}$
$-\{\{x_2, x_4, x_5\}, x_3, x_1\}-\{\{x_4, x_5, x_2\}, x_3, x_1\}-\{\{x_5,
x_2, x_4\}, x_3, x_1\}$

\vspace{2mm}\noindent
$-\{x_3, x_1, \{x_2, x_4, x_5\}\}-\{x_3, x_1, \{x_4, x_5, x_2\}\}-\{x_3, x_1,
\{x_5, x_2, x_4\}\}$
$-\{x_1, x_2, \{x_3, x_4, x_5\}\}$

\vspace{2mm}\noindent$-\{x_1, x_2, \{x_4, x_5, x_3\}\}-\{x_1, x_2,
\{x_5, x_3, x_4\}\}$
$-\{x_2, \{x_3, x_4, x_5\}, x_1\}-\{x_2, \{x_4, x_5, x_3\}, x_1\}$

\vspace{2mm}\noindent
$-\{x_2,
\{x_5, x_3, x_4\}, x_1\}$
$-\{\{x_3, x_4, x_5\}, x_1, x_2\}-\{\{x_4, x_5, x_3\}, x_1, x_2\}-\{\{x_5,
x_3, x_4\}, x_1, x_2\}$

\vspace{2mm}\noindent
$=\{x_4, x_5, \{x_1, x_2, x_3\}\}+\{x_5, \{x_4, x_2, x_3\}, x_1\}+\{x_4, x_5, \{x_2, x_3, x_1\}\}$
$+\{x_5, \{x_4, x_3, x_1\}, x_2\}$

\vspace{2mm}\noindent$+\{x_4, x_5, \{x_3, x_1, x_2\}\}+\{x_5, \{x_4, x_1, x_2\}, x_3\}$
$-\{x_2, x_3, \{x_1, x_4, x_5\}\}-\{x_3, \{x_2, x_4, x_5\}, x_1\}$

\vspace{2mm}\noindent$-\{x_2, x_3, \{x_4, x_5, x_1\}\}$
$-\{x_3, \{x_2, x_5, x_1\}, x_4\}-\{x_2, x_3, \{x_5, x_1, x_4\}\}-\{x_3, \{x_2, x_1, x_4\}, x_5\}$

\vspace{2mm}\noindent
$-\{x_3, x_1, \{x_2, x_4, x_5\}\}-\{x_1, \{x_3, x_4, x_5\}, x_2\}-\{x_3, x_1, \{x_4, x_5, x_2\}\}$
$-\{x_1, \{x_3, x_5, x_2\}, x_4\}$

\vspace{2mm}\noindent$-\{x_3, x_1, \{x_5, x_2, x_4\}\}-\{x_1, \{x_3, x_2, x_4\}, x_5\}$
$-\{x_1, x_2, \{x_3, x_4, x_5\}\}-\{x_2, \{x_1, x_4, x_5\}, x_3\}$

\vspace{2mm}\noindent$-\{x_1, x_2, \{x_4, x_5, x_3\}\}$
$-\{x_2, \{x_1, x_5, x_3\}, x_4\}-\{x_1, x_2, \{x_5, x_3, x_4\}\}-\{x_2, \{x_1, x_3, x_4\}, x_5\}$
$=0$.

Therefore, the multiplication $[~ , ~ , ~]_c$ satisfies \eqref{eq:41}, and $A_c$ is a 3-Lie algebra. The proof is complete.
\end{proof}

The 3-Lie algebra $(A, [ , , ]_c)$ is called {\it the sub-adjacent $3$-Lie algebra of the semi-association 3-algebra $A$,} and is simply denoted by $A_c$.
 $Der(A_c)$ denotes the derivation algebras of the sub-adjacent $3$-Lie algebra $A_c$, and

\begin{theorem} Let $A$ be a semi-associative $3$-algebra. Then

\vspace{2mm}\noindent
1) derivation algebra $Der(A)$ is a subalgebra of  $Der(A_c)$;

 \vspace{2mm}\noindent 2) for all  $x_i\in A$, $1\leq i\leq 4,$ and $S(x_1, x_2)$, $S(x_3, x_4)\in Der(A)$,
$$ S([x_1, x_2, x_3]_c, x_4)=S(x_2, x_3)S(x_1, x_4)+S(x_1, x_2)S(x_3, x_4)-S(x_1, x_3)S(x_2, x_4); $$

 \vspace{2mm}\noindent
 3) if $I$ is an ideal ( a subalgebra ) of semi-associative algebra $A$, then $I$ is also an ideal ( a subalgebra ) of the $3$-Lie algebra $A_c$.
\label{thm:42}
\end{theorem}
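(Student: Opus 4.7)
The three parts call for quite different tactics, with Part 2 being by far the most involved. For Part 1, given $D\in Der(A)$, I would apply $D$ to the three cyclic summands of $[x_1,x_2,x_3]_c$, expand each via the derivation property \eqref{eq:31}, and regroup the nine resulting terms as $[Dx_1,x_2,x_3]_c + [x_1,Dx_2,x_3]_c + [x_1,x_2,Dx_3]_c$. The cyclic structure of $[~,~,~]_c$ makes this regrouping immediate, so this part is essentially a bookkeeping exercise on nine summands.

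For Part 3, the subalgebra case is trivial since $[B,B,B]_c\subseteq\{B,B,B\}+\{B,B,B\}+\{B,B,B\}\subseteq B$. For an ideal $I$, one has $[A,A,I]_c=\{A,A,I\}+\{A,I,A\}+\{I,A,A\}$; the first two summands lie in $I$ by the ideal assumption, and \eqref{eq:21} gives $\{I,A,A\}=-\{A,I,A\}\subseteq I$. Since $[~,~,~]_c$ is totally skew-symmetric, this single inclusion already is the full ideal condition in $A_c$.

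Part 2 is the main obstacle, being a genuine operator identity. My plan is to expand $S(a,b)=L(a,b)-R(a,b)$ on both sides and then apply Lemma \ref{lem:31} together with the commutativity of $T(A)$ from Theorem \ref{thm:31}. On the LHS, combining \eqref{eq:34} and \eqref{eq:35} yields $L(\{y_1,y_2,y_3\},x_4)=L(y_1,x_4)L(y_2,y_3)$, and \eqref{eq:36} gives the companion formula $R(\{y_1,y_2,y_3\},x_4)=-L(y_1,x_4)L(y_2,y_3)+R(x_4,y_1)R(y_2,y_3)$; summing cyclically over $(x_1,x_2,x_3)$ then expresses $S([x_1,x_2,x_3]_c,x_4)$ as a sum of operator products of the form $LL$ and $RR$. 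On the RHS, each $S(a,b)S(c,d)$ expands into four products, but \eqref{eq:35} and \eqref{eq:36} together collapse two of them via $L(a,b)L(c,d)+R(a,b)R(c,d)=R(\{b,c,d\},a)$, yielding the convenient form $S(a,b)S(c,d)=R(\{b,c,d\},a)-L(a,b)R(c,d)-R(a,b)L(c,d)$.

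The concluding step is to match the two sides term by term. The $R(\{\cdot\},\cdot)$-type contributions align via the cyclic structure of $[~,~,~]_c$ together with \eqref{eq:21}, while the remaining $LL$, $RR$, $LR$ and $RL$ combinations cancel using \eqref{eq:37}, \eqref{eq:38} and the commutativity of $T(A)$ (which lets $L$'s and $R$'s be freely reordered). The hard part is the combinatorial bookkeeping rather than any conceptual leap, since one must manage roughly a dozen operator products with careful attention to signs; an alternative (but of comparable length) would be to evaluate both sides at an arbitrary $y\in A$ and manipulate the nested brackets directly through \eqref{eq:22} and \eqref{eq:23}, letting the axioms themselves enforce the cancellations.
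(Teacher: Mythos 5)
Your treatment of parts 1) and 3) is correct and coincides with what the paper does: the paper writes out exactly your nine-term regrouping for 1) and then dismisses 2) and 3) with ``by the similar discussion,'' so your explicit ideal argument for 3) (using \eqref{eq:21} to get $\{I,A,A\}=-\{A,I,A\}\subseteq I$) is precisely the intended one. For part 2) the paper supplies no computation at all, so your operator-calculus route is a genuine alternative to the direct expansion the paper gestures at, and your intermediate formulas all check out: $L(\{y_1,y_2,y_3\},x_4)=L(y_1,x_4)L(y_2,y_3)$ and $R(\{y_1,y_2,y_3\},x_4)=-L(y_1,x_4)L(y_2,y_3)+R(x_4,y_1)R(y_2,y_3)$ do follow from \eqref{eq:34}--\eqref{eq:36}, as does $S(a,b)S(c,d)=R(\{b,c,d\},a)-L(a,b)R(c,d)-R(a,b)L(c,d)$. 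Two caveats about the concluding step you leave as ``bookkeeping.'' First, Lemma \ref{lem:31} and Theorem \ref{thm:31} do not contain every operator relation you will need; for instance $R(\{a,b,c\},d)=R(a,\{b,c,d\})$, which is \eqref{eq:22} read as an identity of right multiplications, is indispensable for aligning the $R(\{\cdot\},\cdot)$-type terms and must be extracted separately from the axioms. Second, the cancellation is organized far more efficiently by first observing that \eqref{eq:21} and \eqref{eq:22} force $\{x_a,\{x_b,x_c,x_d\},x_e\}$ to be \emph{alternating in the four arguments} $x_a,x_b,x_c,x_d$ (the transpositions $a\leftrightarrow b$, $b\leftrightarrow c$, $c\leftrightarrow d$ each produce a sign, and they generate $S_4$); with that observation your alternative route --- evaluating both sides on an arbitrary $x_5$ --- reduces each side to the same expression $3\{x_1,\{x_2,x_3,x_4\},x_5\}+3\{x_1,\{x_2,x_3,x_5\},x_4\}$, which confirms the identity and shows your plan does close. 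So the proposal is sound; it just stops short of the one nontrivial verification, which is also the part the paper itself omits.
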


\begin{proof} For any $D\in Der(A)$, and $x, y, z\in A$,

 $D[x, y, z]_c=D(\{x, y, z\}+\{y, z, x\}+\{z, x, y\})$
$=\{Dx, y, z\}+\{x, Dy, z\}+\{x, y, Dz\}$

\vspace{2mm}\hspace{2cm}
$+\{Dy, z, x\}+\{y, Dz, x\}+\{y, z, Dx\}$
$+\{Dz, x, y\}+\{z, Dx, y\}$

\vspace{2mm}\hspace{2.2cm}$+\{z, x, Dy\}$
$=[Dx, y, z]_C+[x, Dy, z]_C+[x, y, Dz]_c,$
\\therefore, $D\in Der(A_c)$. It follows  1).
By the similar discussion to the above, we get  2) and 3).
\end{proof}

\begin{theorem} Let $A$ be a semi-associative $3$-algebra. Then $L(A), R(A)\subseteq $ $Der(A_c)$.

\label{thm:43}
\end{theorem}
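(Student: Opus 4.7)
The plan is to reduce the inclusion to a single identity and then prove that identity by exploiting a cyclic symmetry that, in characteristic zero, forces it to vanish. The first reduction is structural: by Theorem~\ref{thm:32} one has $S(x_1,x_2) = L(x_1,x_2) - R(x_1,x_2) \in \mathrm{Der}(A)$, and by Theorem~\ref{thm:42}(1), $\mathrm{Der}(A) \subseteq \mathrm{Der}(A_c)$, so $S(x_1,x_2) \in \mathrm{Der}(A_c)$. Since $\mathrm{Der}(A_c)$ is a linear subspace of $gl(A)$, it will suffice to prove $R(x_1,x_2) \in \mathrm{Der}(A_c)$; the inclusion $L(x_1,x_2) \in \mathrm{Der}(A_c)$ then follows from $L = S + R$.

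To verify the derivation condition for $R(x_1,x_2)$, abbreviate $L_i := \{x_1,x_2,y_i\}$ and $R_i := \{y_i,x_1,x_2\}$. Theorem~\ref{thm:31} ($T(A)$ abelian) lets one collapse the left-hand side $R(x_1,x_2)[y_1,y_2,y_3]_c$ to $\{y_1,y_2,R_3\} + \{y_2,y_3,R_1\} + \{y_3,y_1,R_2\}$. Expanding the three brackets on the right-hand side produces nine terms; three of them already match the left-hand side, and the remaining six I rewrite by applying \eqref{eq:21} to pull each $R_i$ out of the first slot and then \eqref{eq:22} in the form $\{y_a,\{y_b,x_1,x_2\},y_c\} = \{y_a,y_b,\{x_1,x_2,y_c\}\}$; after a further application of \eqref{eq:21} to normalise the first two slots, these six residuals collapse to $2\,L(x_1,x_2)[y_1,y_2,y_3]_c$. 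Thus the derivation condition for $R(x_1,x_2)$ reduces to the key identity
\[
L(x_1,x_2)[y_1,y_2,y_3]_c = 0 \quad \text{for all } x_1,x_2,y_1,y_2,y_3 \in A.
\]

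For the key identity, set $\Sigma := L(x_1,x_2)[y_1,y_2,y_3]_c$. Using Theorem~\ref{thm:31} and then \eqref{eq:22}, $\Sigma$ can be rewritten as $\{y_1,R_2,y_3\} + \{y_2,R_3,y_1\} + \{y_3,R_1,y_2\}$. The decisive step is to apply \eqref{eq:23} to each of these three terms: with $a_2 = y_b,\, a_3 = x_1,\, a_4 = x_2$ in \eqref{eq:23}, each $\{y_a,R_b,y_c\}$ splits as $\{y_c,R_b,y_a\} + \{y_a,R_c,y_b\}$, and summing the three splits produces $\Sigma = 2\Sigma'$, where $\Sigma' = \{y_3,R_2,y_1\} + \{y_1,R_3,y_2\} + \{y_2,R_1,y_3\}$. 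Reapplying \eqref{eq:22} inside $\Sigma'$ converts each summand to the form $\{y_a,y_b,L_c\}$, and then \eqref{eq:21} gives $\Sigma' = -\Sigma$. Hence $\Sigma = -2\Sigma$, so $3\Sigma = 0$, and thus $\Sigma = 0$ in characteristic zero. The main obstacle is spotting this threefold symmetry: a naive expansion using only \eqref{eq:22} merely shuffles $\Sigma$ around, and it is the interplay of \eqref{eq:23} with the cyclic structure of $[\,,\,,\,]_c$ that produces the factor of $3$ needed to collapse $\Sigma$ to zero.
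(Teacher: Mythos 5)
Your proof is correct, and while it establishes the same key identity as the paper, it is organised quite differently. The paper verifies the derivation property for $L(x,y)$ head-on by showing that \emph{both} sides of it vanish separately: first $L(x,y)[u,v,w]_c=0$ via \eqref{eq:22}, \eqref{eq:23} and the commutativity of $T(A)$, and then that $[L(x,y)u,v,w]_c+[u,L(x,y)v,w]_c+[u,v,L(x,y)w]_c$ equals three times the same vanishing expression; the case of $R(x,y)$ is dismissed as ``similar.'' You instead verify the identity only for $R(x_1,x_2)$, showing that the difference of its two sides equals $2\,L(x_1,x_2)[y_1,y_2,y_3]_c$ without needing either side to vanish individually, and then recover $L=S+R$ from Theorem \ref{thm:32} ($S(x_1,x_2)\in Der(A)$) and Theorem \ref{thm:42} ($Der(A)\subseteq Der(A_c)$). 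Your route to the key identity $\Sigma:=L(x_1,x_2)[y_1,y_2,y_3]_c=0$ is also cleaner: the chain $\Sigma=2\Sigma'=-2\Sigma$ isolates exactly where the characteristic-zero hypothesis enters (division by $3$), whereas the paper's four-term cancellation after one application of \eqref{eq:23} reaches the same conclusion less transparently. I checked your individual steps --- the commutation of $R(x_1,x_2)$ past $L(y_a,y_b)$ licensed by Theorem \ref{thm:31}, the splitting $\{y_a,R_b,y_c\}=\{y_c,R_b,y_a\}+\{y_a,R_c,y_b\}$ from \eqref{eq:23}, and the sign bookkeeping that turns the six residual terms into $2\Sigma$ --- and they are all sound. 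What your approach buys is economy (the $L$ case comes for free) and a precise accounting of the field hypothesis; what the paper's buys is the explicit by-product that $L(x,y)$ (and, by its ``similar'' argument, $R(x,y)$) annihilates every bracket $[u,v,w]_c$, a fact your argument only establishes for $L$.
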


\begin{proof}  By \eqref{eq:21}-\eqref{eq:23}, \eqref{eq:32} and \eqref{eq:42}, for any $x, y, u, v, w\in A$,
\begin{align*}
 &L(x, y)[u, v, w]_c
 =L(x, y)(\{u, v, w\}+\{v, w, u\}+\{w, u, v\})\\
 =&\{x, y, \{u, v, w\}\}+\{x, y, \{v, w, u\}\}+\{x, y, \{w, u, v\}\}\\
 =&\{u, \{v, x, y\}, w\}+\{x, \{y, v, w\}, u\}+\{x, \{y, w, u\}, v\}\\
 =&\{w, \{v, x, y\}, u\}+\{u, \{w, x, y\}, v\}+\{x, \{y, v, w\}, u\}+\{x, \{y, w, u\}, v\}
 =0,
\end{align*}
\begin{align*}
 &[L(x, y)(u), v, w]_c+[u, L(x, y)(v), w]_c+[u, v, L(x, y)(w)]_c\\
 =&\{\{x, y, u\}, v, w\}+\{v, w, \{x, y, u\}\}+\{w, \{x, y, u\}, v\}\\
 +&\{u, \{x, y, v\}, w\}+\{\{x, y, v\}, w, u\}+\{w, u, \{x, y, v\}\}\\
 +&\{u, v, \{x, y, w\}\}+\{v, \{x, y, w\}, u\}+\{\{x, y, w\}, u, v\}\\
 =&3\{w, \{v, x, y\}, u\}+3\{u, \{w, x, y\}, v\}+3\{x, \{y, v, w\}, u\}+3\{x, \{y, w, u\}, v\} =0.
\end{align*}
Therefore, $L(x, y)[u, v, w]_c=[L(x, y)(u), v, w]_c+[u, L(x, y)(v), w]_c+[u, v, L(x, y)(w)]_c. $

By the similar discussion to the above, we have
$$R(x, y)[u, v, w]_c=[R(x, y)(u), v, w]_c+[u, R(x, y)(v), w]_c+[u, v, R(x, y)(w)]_c. $$
  The proof is complete.
\end{proof}

\subsection{Double modules  of the semi-associative 3-algebras}

In this section, we discuss double modules of semi-associative $3$-algebras.

 A representation (or a module)  $(V, \rho)$ of a $3$-Lie algebra $(G, [, ,])$ is a vector space $V$ and  a linear transformation $\rho :$ $G\wedge G\longrightarrow End(V)$,
 satisfying, $\forall x_1, x_2, x_3, x_4\in G$,
 \begin{equation}
[\rho(x_1, x_2), \rho(x_3, x_4)]=\rho([x_1, x_2, x_3], x_4)+\rho(x_3, [x_1, x_2, x_4]),
\label{eq:43}
\end{equation}
\begin{equation}
\rho([x_1, x_2, x_3], x_4)=\rho(x_1, x_2)\rho(x_3, x_4)+\rho(x_2, x_3)\rho(x_1, x_4)+\rho(x_3, x_1)\rho(x_2, x_4).
\label{eq:44}
\end{equation}

We know that $(V, \rho)$ is a 3-Lie algebra $(G, [ , , ])$-module if and only if $G\dot+ V$ is a 3-Lie algebra in the multiplication $[ , , ]_{\rho}$,
where $\forall x_1, x_2, x_3\in A$ and $v_1, v_2, v_3\in V,$
 \begin{equation}
[x_1+v_1, x_2+v_2, x_3+v_3]_{\rho}=[x_1, x_2, x_3]+\rho(x_1, x_2)v_3+\rho(x_2, x_3)v_1+\rho(x_3, x_1)v_2.
\label{eq:modulelie}
\end{equation}

\begin{defn} Let $A$ be a semi-associative 3-algebra, $V$ be a vector space, and $\phi,\psi$: $A\times A\longrightarrow End(V)$  be linear mappings. If $\phi$ and $\psi$ satisfy the following properties,  $\forall x_1, x_2, x_3, x_4\in A$,
\begin{equation}
\phi(x_1, x_2)=-\phi(x_2, x_1),
\label{eq:51}
\end{equation}
\begin{equation}
\phi(x_1, \{x_2, x_3, x_4\})=\phi(x_1, x_2)\phi (x_3, x_4),
\label{eq:52}
\end{equation}
\begin{equation}
\psi(\{x_1, x_2, x_3\}, x_4)=\psi(x_1, \{x_2, x_3, x_4\})=\psi(x_1, x_4)\psi(x_2, x_3),
\label{eq:53}
\end{equation}
\begin{equation}
\psi(x_1, x_2)\psi(x_3, x_4)=\psi(x_2, x_1)\phi(x_3, x_4)+\psi(x_1, x_3)\phi(x_2, x_4),
\label{eq:54}
\end{equation}
\begin{equation}
\psi(\{x_1, x_2, x_3\}, x_4)=\phi(x_4, \{x_1, x_2, x_3\})+\psi(\{x_4, x_2, x_3\}, x_1),
\label{eq:55}
\end{equation}
\begin{equation}
\psi(x_1, x_2)\psi(x_3, x_4)=\psi(x_1, x_2)\phi(x_3, x_4)=\phi(x_1, x_3)\psi(x_4, x_2)=-\phi(x_1, \{x_3, x_2, x_4\}),
\label{eq:56}
\end{equation}
\\
then $(\phi, \psi, V)$ is called a double representation of $A$, or is simply called a double module of $A$.
\label{defn:51}
\end{defn}

\begin{theorem}~Let $A$ be a semi-associative 3-algebra, and $l,r$: $A\times A\longrightarrow End(V)$  be linear mappings.
Then triple $(\phi, \psi, V)$ is a double module of $A$ if and only if $( A\dot+ V, \{ , , \}_{\phi\psi})$ is a
semi-associative 3-algebra, where $\forall x_i\in A, v_i\in V, i=1, 2, 3$,
 \begin{equation}
 \{x_1+v_1, x_2+v_2, x_3+v_3\}_{\phi\psi}=\{x_1, x_2, x_3\}+\phi(x_1, x_2)v_3-\psi(x_1, x_3)v_2+\psi(x_2, x_3)v_1.
 \label{eq:57}
\end{equation}

\label{thm:51}
\end{theorem}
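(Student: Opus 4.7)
The plan is to reduce the theorem to a single calculation carried out once and read in both directions. For arbitrary $x_i\in A$ and $v_i\in V$, set $\bar x_i = x_i+v_i$ and substitute such elements into each of the three defining identities \eqref{eq:21}, \eqref{eq:22}, \eqref{eq:23} for the proposed multiplication \eqref{eq:57} on $A\dot+ V$. Every term on either side decomposes into an $A$-component and a $V$-component; the $A$-components cancel automatically because $(A,\{,,\})$ already satisfies \eqref{eq:21}--\eqref{eq:23}. What survives is a $V$-valued identity whose terms are linear in $v_1,\dots,v_5$; since the $v_i$ vary independently in $V$, comparing coefficients of each $v_i$ produces scalar identities in $\phi$ and $\psi$, and these should be exactly \eqref{eq:51}--\eqref{eq:56}.

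First I would treat the antisymmetry axiom \eqref{eq:21}. A direct expansion shows that the coefficients of $v_1$ and $v_2$ agree on both sides without any hypothesis, while the coefficient of $v_3$ is $\phi(x_1,x_2)v_3$ on the left and $-\phi(x_2,x_1)v_3$ on the right; this identity is therefore equivalent to \eqref{eq:51}. Next I would expand \eqref{eq:22}: writing $y=\{x_2,x_3,x_4\}$ and $z=\{x_3,x_4,x_5\}$ and applying \eqref{eq:57} twice on each side, I would match the coefficient of each $v_i$. The coefficient of $v_5$ yields \eqref{eq:52}; the coefficients of $v_4$ and $v_3$ produce the mixed equalities in \eqref{eq:56} (after using \eqref{eq:51} to swap arguments of $\phi$ where needed); the coefficients of $v_1$ and $v_2$ yield the first two equalities in \eqref{eq:53}. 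Finally, the third axiom \eqref{eq:23} is expanded in the same fashion, and matching coefficients of $v_1,\dots,v_5$ gives \eqref{eq:54}, \eqref{eq:55}, and the remaining equalities in \eqref{eq:56}.

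Because each of \eqref{eq:51}--\eqref{eq:56} arises as an equivalence of coefficients in one of these three expansions, both directions of the theorem follow from the same computation: if $(\phi,\psi,V)$ is a double module, all coefficients match and $(A\dot+ V,\{,,\}_{\phi\psi})$ satisfies \eqref{eq:21}--\eqref{eq:23}; conversely, setting four of the $v_i$ to zero and letting the fifth range freely over $V$ extracts each module identity in turn. The main obstacle is clerical bookkeeping, since five free vector arguments and three multilinear axioms produce a large number of terms that must be sorted by which $v_i$ they multiply; no conceptual difficulty arises beyond exploiting the independence of the $v_i$ to read off coefficients cleanly.
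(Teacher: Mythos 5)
Your proposal is correct and follows essentially the same route as the paper: substitute $x_i+v_i$ into \eqref{eq:21}--\eqref{eq:23}, let the $A$-parts cancel, and read off the module axioms by isolating the coefficient of each $v_i$ (the paper does this by setting all but one $v_i$ to zero, then observes that the resulting ten coefficient identities combine to give exactly \eqref{eq:51}--\eqref{eq:56}). The only point your sketch compresses is that final regrouping step, which is routine.
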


\begin{proof}  If  $(\phi, \psi, V)$ is a double  mode of the semi-associative algebra $A$, we will prove that $(A\dot+V, \{ , , \}_{\phi\psi})$ is  a semi-associative 3-algebra.

Thanks to \eqref{eq:57},  for all $x_i\in A, v_i\in V, 1\leq i\leq 5$,

$\{x_1+v_1, x_2+v_2, x_3+v_3\}_{\phi\psi}=-\{x_2+v_2, x_1+v_1, x_3+v_3\}_{\phi\psi}$, \eqref{eq:21} holds.

By \eqref{eq:57}, \eqref{eq:52}, \eqref{eq:53} and \eqref{eq:56},

$\{x_1+v_1, \{x_2+v_2, x_3+v_3, x_4+v_4\}_{\phi\psi}, x_5+v_5\}_{\phi\psi}$

\vspace{1mm}\noindent
$=\{x_1, x_2, \{x_3, x_4, x_5\})+\psi(x_2, \{x_3, x_4, x_5\})v_1-\psi(x_1, \{x_3, x_4, x_5\})v_2$

\vspace{1mm}
$+\phi(x_1, x_2)\psi(x_4, x_5)v_3-\phi(x_1, x_2)\psi(x_3, x_5)v_4+\phi(x_1, x_2)\phi(x_3, x_4)v_5$

\vspace{1mm}\noindent
$=\{x_1+v_1, x_2+v_2, \{x_3+v_3, x_4+v_4, x_5+v_5\}_{\phi\psi}\}_{\phi\psi}$,
\eqref{eq:22} holds.

Thanks to \eqref{eq:57}, \eqref{eq:53}- \eqref{eq:55},

$\{x_1+v_1, \{x_2+v_2, x_3+v_3, x_4+v_4\}_{\phi\psi}, x_5+v_5\}_{\phi\psi}$

\vspace{1mm}\noindent
$=\{x_5, \{x_2, x_3, x_4\}, x_1\}+\{x_1, \{x_5, x_3, x_4\}, x_2\}+(\phi(x_5, \{x_2, x_3, x_4\})$

\vspace{1mm} $+\psi(\{x_5, x_3, x_4\}, x_2))v_1+(\phi(x_1, \{x_5, x_3, x_4\})-\psi(x_5, x_1)\psi(x_3, x_4))v_2$

\vspace{1mm}
   $+(\psi(x_5, x_1)\psi(x_2, x_4)+\psi(x_1, x_2)\psi(x_5, x_4))v_3-(\psi(x_5, x_1)\phi(x_2, x_3)$

   \vspace{1mm}
$+\psi(x_1, x_2)\phi(x_5, x_3))v_4+(\psi(\{x_2, x_3, x_4\}, x_1)-\psi(x_1, x_2)\psi(x_3, x_4))v_5$

\vspace{1mm}\noindent
$=\{x_5+v_5, \{x_2+v_2, x_3+v_3, x_4+v_4\}_{\phi\psi}, x_1+v_1\}\}_{\phi\psi}$

\vspace{1mm}
$+\{x_1+v_1, \{x_5+v_5, x_3+v_3, x_4+v_4\}_{\phi\psi}, x_2+v_2\}_{\phi\psi}$,
 \eqref{eq:23} holds.

 Therefore, $(A\dot+V, \{, , \}_{\phi\psi})$ is  a semi-associative 3-algebra.

Conversely, if $(A\dot+V, \{, , \}_{\phi\psi})$ is a semi-associative algebra. Then  by  \eqref{eq:21},  $\forall x_i\in A, v_i\in V$, $1\leq i\leq 3$,
$$\{x_1+v_1, x_2+v_2, x_3+v_3\}_{\phi\psi}=-\{x_2+v_2, x_1+v_1, x_3+v_3\}_{\phi\psi}.$$
Thanks to \eqref{eq:57},

$\{x_1, x_2, x_3+v_3\}_{\phi\psi}=\{x_1, x_2, x_3\}+\phi(x_1, x_2)v_3$,
$\{x_2, x_1, x_3+v_3\}=\{x_2, x_1, x_3\}+\phi(x_2, x_1)v_3$, \\
  therefore, $\phi(x_1, x_2)=-\phi(x_2, x_1) $,  \eqref{eq:51} holds.

By \eqref{eq:22}, and  \eqref{eq:57},

$\{x_1+v_1, \{x_2+v_2, x_3+v_3, x_4+v_4\}_{\phi\psi}, x_5+v_5\}_{\phi\psi}$

\vspace{1mm}\noindent$=\{x_1+v_1, x_2+v_2, \{x_3+v_3, x_4+v_4, x_5+v_5\}_{\phi\psi}\}_{\phi\psi}$,

\vspace{2mm}$\{x_1+v_1, \{x_2+v_2, x_3+v_3, x_4+v_4\}_{\phi\psi}, x_5+v_5\}_{\phi\psi}$

\vspace{1mm}\noindent $=\{x_1, \{x_2, x_3, x_4\}, x_5\}+\psi(\{x_2, x_3, x_4\}, x_5)v_1-\psi(x_1, x_5)\psi(x_3, x_4)v_2$

\vspace{1mm}
$+\psi(x_1, x_5)\psi(x_2, x_4)v_3-\psi(x_1, x_5)\phi(x_2, x_3)v_4+\phi(x_1, \{x_2, x_3, x_4\})v_5$,

\vspace{2mm}$\{x_1+v_1, x_2+v_2, \{x_3+v_3, x_4+v_4, x_5+v_5\}_{\phi\psi}\}_{\phi\psi}$

\vspace{1mm}\noindent
$=\{x_1, x_2, \{x_3, x_4, x_5\}\}+\psi(x_2, \{x_3, x_4, x_5\})v_1-\psi(x_1, \{x_3, x_4, x_5\})v_2$

\vspace{1mm}
$+\phi(x_1, x_2)\psi(x_4, x_5)v_3-\phi(x_1, x_2)\psi(x_3, x_5)v_4+\phi(x_1, x_2)\phi(x_3, x_4)v_5$.

\vspace{1mm}\noindent \
 If we suppose  $v_i\neq 0, v_j=0$, for  $1\leq i\neq j\leq 5$, then we get
\begin{equation}
\psi(\{x_2, x_3, x_4\}, x_5)=\psi(x_2, \{x_3, x_4, x_5\}),
\label{eq:58}
\end{equation}
\begin{equation}
\phi(x_1, \{x_2, x_3, x_4\})=\phi(x_1, x_2)\phi(x_3, x_4),
\label{eq:59}
\end{equation}
\begin{equation}
\psi(x_1, x_5)\psi(x_3, x_4)=\psi(x_1, \{x_3, x_4, x_5\}),
\label{eq:60}
\end{equation}
\begin{equation}
\psi(x_1, x_5)\psi(x_2, x_4)=\phi(x_1, x_2)\psi(x_4, x_5),
\label{eq:61}
\end{equation}
\begin{equation}
\psi(x_1, x_5)\phi(x_2, x_3)=\phi(x_1, x_2)\psi(x_3, x_5).
\label{eq:62}
\end{equation}

Thanks to \eqref{eq:23}, and \eqref{eq:57},

$\{x_5+v_5, \{x_2+v_2, x_3+v_3, x_4+v_4\}_{\phi\psi}, x_1+v_1\}_{\phi\psi}+\{x_1+v_1, \{x_5+v_5, x_3+v_3, x_4+v_4\}_{\phi\psi}, x_2+v_2\}_{\phi\psi}$

\vspace{1mm}\noindent$=\{x_5, \{x_2, x_3, x_4\}, x_1\}+\{x_1, \{x_5, x_3, x_4\}, x_2\}+(\phi(x_5, \{x_2, x_3, x_4\})$$+\psi(\{x_5, x_3, x_4\}, x_2))v_1$

\vspace{1mm}$+(\phi(x_1, \{x_5, x_3, x_4\})-\psi(x_5, x_1)\psi(x_3, x_4))v_2$$+(\psi(x_5, x_1)\psi(x_2, x_4)+\psi(x_1, x_2)\psi(x_5, x_4))v_3$

\vspace{1mm}$-(\psi(x_5, x_1)\phi(x_2, x_3)$$+\psi(x_1, x_2)\phi(x_5, x_3))v_4+(\psi(\{x_2, x_3, x_4\}, x_1)-\psi(x_1, x_2)\psi(x_3, x_4))v_5$, \\
similarly, for $v_i\neq 0, v_j=0$, for $1\leq i\neq j\leq 5$, we have
\begin{equation}
\psi(x_1, x_5)\psi(x_2, x_4)=\psi(x_5, x_1)\psi(x_2, x_4)+\psi(x_1, x_2)\psi(x_5, x_4),
\label{eq:63}
\end{equation}
\begin{equation}
\psi(x_1, x_5)\phi(x_2, x_3)=\psi(x_5, x_1)\phi(x_2, x_3)+\psi(x_1, x_2)\phi(x_5, x_3),
\label{eq:64}
\end{equation}
\begin{equation}
\psi(x_1, x_5)\psi(x_3, x_4)=\psi(x_5, x_1)\psi(x_3, x_4)-\phi(x_1, \{x_5, x_3, x_4\}),
\label{eq:65}
\end{equation}
\begin{equation}
\psi(\{x_2, x_3, x_4\}, x_5)=\phi(x_5, \{x_2, x_3, x_4\})+\psi(\{x_5, x_3, x_4\}, x_2),
\label{eq:66}
\end{equation}
\begin{equation}
\phi(x_1, \{x_2, x_3, x_4\})=\psi(\{x_2, x_3, x_4\}, x_1)-\psi(x_1, x_2)\psi(x_3, x_4).
\label{eq:67}
\end{equation}

Now we analyze the above identities.  \eqref{eq:52} follows  from \eqref{eq:59}; \eqref{eq:55} follows  from \eqref{eq:66}; and it is clear that \eqref{eq:63} is equivalent to \eqref{eq:61}, \eqref{eq:62} and \eqref{eq:64}; and \eqref{eq:67} is equivalent to \eqref{eq:58}, \eqref{eq:60} and \eqref{eq:66}; \eqref{eq:53} follows from  \eqref{eq:58} and \eqref{eq:59};   \eqref{eq:54} follows from \eqref{eq:61}, \eqref{eq:62} and \eqref{eq:64}; and \eqref{eq:56} follows from  \eqref{eq:61}, \eqref{eq:62}, \eqref{eq:64} and \eqref{eq:65}.
Therefore, $(l, r, V)$ is a double  mode of the semi-associative algebra $A$.
The proof is complete.
\end{proof}

Let $\tau :$ $V\otimes V \rightarrow V\otimes  V $ be the exchange mapping, that is,
$$\tau(x_1, x_2)=(x_2, x_1), \forall x_1, x_2\in V. $$

\begin{theorem}~Let $A$ be a semi-associative $3$-algebra,  and $(l, r, V)$ be the double module of $A$. Then $(V, \rho)$ is a 3-Lie algebra $A_c$-module, where
$$\rho: A_c\wedge A_c \rightarrow A_c, \quad \rho=\phi-\psi\tau+\psi.$$
\label{thm:51c}
\end{theorem}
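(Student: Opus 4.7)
The plan is to avoid checking the module axioms \eqref{eq:43} and \eqref{eq:44} directly, and instead obtain them for free by combining Theorem \ref{thm:51} with Theorem 4.1 (on sub-adjacent $3$-Lie algebras) and the extension characterization of $3$-Lie modules recorded in equation \eqref{eq:modulelie}. The point is that ``semi-associative structure $\Rightarrow$ sub-adjacent $3$-Lie structure'' is functorial on direct sums, so the module statement on $V$ over $A_c$ will be extracted from the sub-adjacent $3$-Lie algebra of $A\dot+V$.

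First I would apply Theorem \ref{thm:51}: since $(\phi,\psi,V)$ is a double module, the vector space $A\dot+V$ with the product $\{\,,\,,\,\}_{\phi\psi}$ of \eqref{eq:57} is a semi-associative $3$-algebra. Next I invoke Theorem 4.1 on this algebra, which yields a sub-adjacent $3$-Lie bracket $[\,,\,,\,]_c$ on $A\dot+V$. The key step is then a direct computation of this bracket on elements $x_i+v_i$ using the definition
\[
[x_1+v_1,x_2+v_2,x_3+v_3]_c=\sum_{\mathrm{cyc}}\{x_1+v_1,x_2+v_2,x_3+v_3\}_{\phi\psi},
\]
and gathering the coefficient of each $v_i$. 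Using the skew-symmetry $\phi(x,y)=-\phi(y,x)$ from \eqref{eq:51}, the coefficient of $v_3$ collapses to $\phi(x_1,x_2)-\psi(x_2,x_1)+\psi(x_1,x_2)=\rho(x_1,x_2)$, and cyclic permutations give $\rho(x_2,x_3)$ for $v_1$ and $\rho(x_3,x_1)$ for $v_2$. Meanwhile the pure $A$-part is $[x_1,x_2,x_3]_c$ by definition. Hence
\[
[x_1+v_1,x_2+v_2,x_3+v_3]_c=[x_1,x_2,x_3]_c+\rho(x_1,x_2)v_3+\rho(x_2,x_3)v_1+\rho(x_3,x_1)v_2,
\]
which is precisely the semi-direct product formula \eqref{eq:modulelie} with the restriction of $[\,,\,,\,]_c$ to $A$ equal to $A_c$.

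Because the left side is already known to be a $3$-Lie bracket, the known equivalence ``$(V,\rho)$ is an $A_c$-module iff $A_c\dot+V$ with the bracket \eqref{eq:modulelie} is a $3$-Lie algebra'' (recalled just before Definition \ref{defn:51}) immediately implies that $(V,\rho)$ is an $A_c$-module, establishing the identities \eqref{eq:43} and \eqref{eq:44} for $\rho$. Skew-symmetry $\rho(x,y)=-\rho(y,x)$, needed so that $\rho$ really factors through $A_c\wedge A_c$, is automatic from \eqref{eq:51} applied to $\phi$ and from the definition $\rho=\phi-\psi\tau+\psi$.

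The main obstacle is essentially bookkeeping: one must correctly collect, for each slot $v_1,v_2,v_3$, the three contributions from the cyclic sum and verify that with the right signs they assemble into $\rho$ evaluated on the appropriate pair. Once this identification is made, no direct verification of the representation axioms is needed. If one wished to avoid the sub-adjacent trick and prove \eqref{eq:43}, \eqref{eq:44} head-on, the serious work would be to repeatedly use \eqref{eq:52}--\eqref{eq:56} together with the skew-symmetries to match associators on both sides; this is feasible but considerably longer than the argument above.
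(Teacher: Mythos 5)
Your proposal is correct and follows essentially the same route as the paper: apply Theorem \ref{thm:51} to get the semi-associative structure on $A\dot+V$, pass to its sub-adjacent $3$-Lie algebra, collect the coefficients of $v_1,v_2,v_3$ to recognize the semi-direct product bracket \eqref{eq:modulelie} with $\rho=\phi-\psi\tau+\psi$, and conclude from the module/extension equivalence. The coefficient bookkeeping you describe matches the paper's computation exactly.
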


\begin{proof}
 Thanks to Theorem \ref{thm:51},  $A\dot+ V$ is a semi-associative $3$-algebra in the multiplication \eqref{eq:57}.
Therefore, the multiplication of  the subjacent 3-Lie algebra $(A\dot+ V)_c$ of $A\dot+ V$ is  \eqref{eq:42}, that is, $\forall x_1, x_2, x_3 \in A,$ $ v_1, v_2, v_3\in V$,

$[x_1+v_1, x_2+v_2, x_3+v_3]_c$

 \vspace{1mm}\noindent
 $=\{x_1+v_1, x_2+v_2, x_3+v_3\}_{\phi\psi}+\{x_2+v_2, x_3+v_3, x_1+v_1\}_{\phi\psi}+\{x_3+v_3, x_1+v_1, x_2+v_2\}_{\phi\psi}$

 \vspace{1mm}\noindent
 $=\{x_1, x_2, x_3\}+\phi(x_1, x_2)v_3-\psi(x_1, x_3)v_2+\psi(x_2, x_3)v_1+\{x_2, x_3, x_1\}+\psi(x_2, x_3)v_1$

 \vspace{1mm}\noindent
 $-\psi(x_2, x_1)v_3+\psi(x_3, x_1)v_2+\{x_3, x_1, x_2\}+\phi(x_3, x_1)v_2-\psi(x_3, x_2)v_1+\psi(x_1, x_2)v_3$

 \vspace{1mm}\noindent
 $=[x_1, x_2, x_3]_c+(\phi-\psi\tau+\psi)(x_1, x_2)v_3+(\phi-\psi\tau+\psi)(x_2, x_3)v_1+(\phi-\psi\tau+\psi)(x_3, x_1)v_2. $

 Follows from \eqref{eq:modulelie}, $(V, \rho)$ is a  3-Lie algebra $A_c$, where  $\rho=\phi-\psi\tau+\psi$.
\end{proof}

\begin{theorem}
Let $A$ be a semi-associative 3-algebra, and  $L(x, y), R(x, y):A\times A \rightarrow A$ be left and right multiplications defined as  \eqref{eq:32},  $ \forall x, y\in A$. Then $(L, R, A)$ is a double module of $A$.
\label{thm:52}
\end{theorem}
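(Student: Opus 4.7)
The plan is to verify each of the six defining conditions \eqref{eq:51}--\eqref{eq:56} of a double module with $\phi = L$, $\psi = R$ and $V = A$, drawing on the axioms \eqref{eq:21}--\eqref{eq:23} of $A$ together with the identities already established in Lemma \ref{lem:31} and Theorem \ref{thm:31}. Several of the conditions are nothing more than restatements of previous results: \eqref{eq:51} is exactly \eqref{eq:34}, and \eqref{eq:52} is \eqref{eq:35}. Condition \eqref{eq:55}, when applied to an arbitrary $y \in A$, reads
$$\{y, \{x_1, x_2, x_3\}, x_4\} = \{x_4, \{x_1, x_2, x_3\}, y\} + \{y, \{x_4, x_2, x_3\}, x_1\},$$
which is precisely \eqref{eq:23}. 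Likewise, \eqref{eq:53} evaluated at $y$ becomes $\{y, \{x_1, x_2, x_3\}, x_4\} = \{y, x_1, \{x_2, x_3, x_4\}\} = \{\{y, x_2, x_3\}, x_1, x_4\}$; the first equality is a direct application of \eqref{eq:22}, and the second follows by applying \eqref{eq:21} and \eqref{eq:22} once more to shift the inner bracket into the middle slot.

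The remaining two conditions \eqref{eq:54} and \eqref{eq:56} compare compositions of two operators, and I would verify them by combining (a) the commutativity of $T(A) = L(A) + R(A)$ from Theorem \ref{thm:31}, which permits one to interchange $L$'s and $R$'s freely inside any composition, with (b) the reduction formulas \eqref{eq:35}--\eqref{eq:38} of Lemma \ref{lem:31}, which collapse binary products of operators into unary ones. For instance, the fourth expression in \eqref{eq:56}, namely $-L(x_1, \{x_3, x_2, x_4\})$, equals $L(x_1, x_2)L(x_3, x_4)$ by \eqref{eq:34} and \eqref{eq:35}, while the first expression $R(x_1, x_2)R(x_3, x_4)$ can be rewritten via \eqref{eq:36}; matching the resulting expressions through \eqref{eq:23} yields the chain of equalities. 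Condition \eqref{eq:54} is unpacked in a parallel fashion using \eqref{eq:37} and \eqref{eq:38}.

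The main obstacle is the bookkeeping in \eqref{eq:56}, a chain of three equalities among four distinct operator expressions that each involve different index permutations. The cleanest route is to evaluate every composition on an arbitrary $y \in A$, use \eqref{eq:22} to move the innermost bracket into a common slot across all four expressions, and then reconcile the remaining sign and index discrepancies using \eqref{eq:21} and \eqref{eq:23}. Since $T(A)$ is abelian, every pair of compositions can be rearranged freely, so no new machinery beyond Lemma \ref{lem:31} and Theorem \ref{thm:31} is required to close the argument.
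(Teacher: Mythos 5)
Your proposal is correct and takes essentially the same approach as the paper: the paper's entire proof is the one-line instruction to apply Lemma \ref{lem:31}, Theorem \ref{thm:51} and Definition \ref{defn:51}, i.e.\ to verify conditions \eqref{eq:51}--\eqref{eq:56} for $\phi=L$, $\psi=R$, which is exactly what you do. Your explicit reductions of \eqref{eq:51}, \eqref{eq:52}, \eqref{eq:53} and \eqref{eq:55} to \eqref{eq:34}, \eqref{eq:35}, \eqref{eq:22} and \eqref{eq:23} check out, and your outlined verification of \eqref{eq:54} and \eqref{eq:56} already supplies more detail than the paper itself provides.
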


\begin{proof} Apply Lemma \ref{lem:31}, Theorem \ref{thm:51} and Definition \ref{defn:51}.\end{proof}

The  double module  $(L, R, A)$ of the semi-associative 3-algebra $A$ is called {\it the regular representation of $A$}, or {\it the adjoint module of $A$.}

\begin{theorem} Let $A$ be a semi-associative 3-algebra, and  $(l, r, V)$ be  a double module of  $A$.
Then  $(\phi^{*}, \psi^{*}, V^{*})$ is also a double module of $A$,
where $V^*$ is the dual space of $V$, and $\phi^*, \psi^* : A\times A\longrightarrow End(V^*)$ are defined by
$\forall x, y\in A, v\in V, \xi\in V^*$,
\begin{equation}
\langle \phi^*(x, y)(\xi), v\rangle=-\langle \xi, \phi(x, y)(v)\rangle, ~~ \langle \psi^{*}(x, y)(\xi), v\rangle=-\langle \xi, \psi(x, y)(v)\rangle.
\label{eq:68}
\end{equation}
\label{thm:53}
\end{theorem}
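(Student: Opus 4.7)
The plan is to verify each of the six double-module axioms \eqref{eq:51}--\eqref{eq:56} for the triple $(\phi^*, \psi^*, V^*)$ directly. The unifying device is to evaluate every proposed identity on an arbitrary $\xi\in V^*$ and then pair against an arbitrary $v\in V$ via \eqref{eq:68}, thereby transferring each claim into an identity about compositions of $\phi$ and $\psi$ in $\mathrm{End}(V)$. Because \eqref{eq:68} defines $\phi^*(x,y)$ and $\psi^*(x,y)$ as negative transposes of $\phi(x,y)$ and $\psi(x,y)$, a two-fold composition on $V^*$ corresponds under the pairing to the same two-fold composition on $V$ but with the order of the factors reversed (the two signs cancel).

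Among the six axioms, the skew-symmetry \eqref{eq:51} for $\phi^*$ is immediate from $\phi(x_1,x_2)=-\phi(x_2,x_1)$, and the additive identity \eqref{eq:55} for $(\phi^*, \psi^*)$ drops out at once by applying $-\langle \xi, \cdot\, v\rangle$ to both sides of the original \eqref{eq:55}, since no compositions are involved and the minus signs from \eqref{eq:68} distribute uniformly.

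The substantive work lies in the four multiplicative axioms \eqref{eq:52}, \eqref{eq:53}, \eqref{eq:54}, \eqref{eq:56}. After the pairing reduction, each becomes an identity in $\mathrm{End}(V)$ of the same shape as the original axiom but with the interior two-fold compositions reversed in order. The key tool for producing these reversed-order analogs is the cluster of coincidence relations packaged in \eqref{eq:56} and \eqref{eq:54}, which equate a priori distinct compositions of $\phi$ and $\psi$; combining these with the skew-symmetries \eqref{eq:21} and \eqref{eq:51} and with suitable relabelings of $(x_1,x_2,x_3,x_4)$ yields the required reversed-order identities. The main obstacle is precisely this composition-reversal forced by dualization: the bare axioms are asymmetric in the order of the two factors, so verifying the $\phi^*\phi^*$, $\phi^*\psi^*$, and $\psi^*\psi^*$ relations is a short combinatorial rearrangement using several axioms simultaneously rather than a one-line translation. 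Once these four reversed-order identities are in hand, Definition \ref{defn:51} is satisfied for $(\phi^*, \psi^*, V^*)$ and the theorem follows.
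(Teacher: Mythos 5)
Your overall strategy---transporting each axiom of Definition \ref{defn:51} across the pairing \eqref{eq:68}, so that a two-fold composition on $V^*$ corresponds to the reversed-order composition on $V$ with the two minus signs cancelling---is sound, and it is already more explicit than the paper's proof, which consists only of a citation of Lemma \ref{lem:31}, Definition \ref{defn:51} and Theorem \ref{thm:51}. Your disposal of \eqref{eq:51} and \eqref{eq:55} is correct: neither involves a composition, so both dualize term by term.

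The gap is in the step you dismiss as ``a short combinatorial rearrangement.'' For the four multiplicative axioms, dualization forces identities of the form $\phi(x_3,x_4)\phi(x_1,x_2)=-\phi(x_1,x_2)\phi(x_3,x_4)$ (and the analogous statements for the $\psi\psi$ and mixed compositions): the factors must \emph{anticommute}. But relabeling the axioms produces the opposite sign. By \eqref{eq:52} and \eqref{eq:51}, together with \eqref{eq:21} applied inside the bracket, the map $(x_1,x_2,x_3,x_4)\mapsto\phi(x_1,\{x_2,x_3,x_4\})=\phi(x_1,x_2)\phi(x_3,x_4)$ is alternating under the transpositions $(x_1x_2)$, $(x_2x_3)$, $(x_3x_4)$, hence totally antisymmetric; the pair exchange $(x_1x_3)(x_2x_4)$ is even, so $\phi(x_3,x_4)\phi(x_1,x_2)=+\phi(x_1,x_2)\phi(x_3,x_4)$. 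No sign-juggling turns this $+$ into the required $-$; the two statements are compatible only because every such composition vanishes, and that is what must actually be proved. It does follow from the axioms: substituting the coincidences of \eqref{eq:56} into \eqref{eq:54} and using the total antisymmetry just noted gives $\phi(x_1,\{x_2,x_3,x_4\})=-2\,\phi(x_1,\{x_2,x_3,x_4\})$, hence $\phi(x_1,\{x_2,x_3,x_4\})=0$ in characteristic zero; by \eqref{eq:52}, \eqref{eq:53} and \eqref{eq:56} this annihilates every two-fold composition and every value of $\phi$ or $\psi$ on a triple product, after which each dualized axiom \eqref{eq:52}, \eqref{eq:53}, \eqref{eq:54}, \eqref{eq:56} reduces to $0=0$. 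So your plan can be completed, but the decisive fact is this degeneracy rather than any reversal of factor order, and as written your argument neither states nor establishes it.
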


\begin{proof} Apply Lemma \ref{lem:31}, Definition \ref{defn:51} and Theorem \ref{thm:51}.\end{proof}

\begin{coro}
Let $A$ be a semi-associative 3-algebra. Then $(L^*, R^*, A^*)$ is a double module of $A$.
\label{coro:51}
\end{coro}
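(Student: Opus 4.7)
The plan is to obtain this corollary as an immediate combination of two results already established in the excerpt. First, by Theorem \ref{thm:52}, the pair of left and right multiplications $(L, R, A)$ defined via \eqref{eq:32} forms a double module of $A$ (the regular/adjoint module). Second, by Theorem \ref{thm:53}, dualizing a double module produces a double module: given any double module $(\phi, \psi, V)$ of $A$, the triple $(\phi^*, \psi^*, V^*)$ defined by the pairings in \eqref{eq:68} is again a double module.

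Therefore, I would simply specialize Theorem \ref{thm:53} to the regular representation $(L, R, A)$: taking $V = A$, $\phi = L$, $\psi = R$, the dual mappings $L^*, R^* : A \times A \to \mathrm{End}(A^*)$ are defined by
\begin{equation*}
\langle L^*(x, y)(\xi), v\rangle = -\langle \xi, L(x, y)(v)\rangle, \qquad \langle R^*(x, y)(\xi), v\rangle = -\langle \xi, R(x, y)(v)\rangle,
\end{equation*}
for all $x, y, v \in A$ and $\xi \in A^*$. Theorem \ref{thm:53} then yields directly that $(L^*, R^*, A^*)$ satisfies the defining identities \eqref{eq:51}--\eqref{eq:56} of Definition \ref{defn:51}.

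Since both ingredients are already proved, there is no real obstacle here; the corollary is essentially a formal consequence. The only thing worth writing down explicitly in the proof would be one sentence identifying $(\phi, \psi, V)$ with $(L, R, A)$ so that the reader sees why Theorem \ref{thm:53} applies. I would present the proof in a single line: apply Theorem \ref{thm:52} followed by Theorem \ref{thm:53}.
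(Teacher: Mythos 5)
Your proposal is correct and matches the paper's own proof exactly: the paper likewise derives the corollary directly from Theorem \ref{thm:52} (the regular double module $(L,R,A)$) combined with Theorem \ref{thm:53} (dualization of double modules). Nothing further is needed.
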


\begin{proof}
The result follows from Theorem \ref{thm:52} and Theorem \ref{thm:53}, directly.

\end{proof}

\subsection{Double extensions of  semi-associative 3-algebras by cocycles}

\begin{defn} Let  $A$ be a semi-associative 3-algebra, if linear mapping $\theta : A\otimes A\otimes A \rightarrow A^{*}$ satisfies that for all $x, y, z, w, u\in A$,
  \begin{equation}
\theta\{x, y, z\}=-\theta\{y, x, z\},
\label{eq:69}
\end{equation}
\begin{equation}
\theta\{x, \{y, z, w\}, u\}=\theta\{x, y, \{z, w, u\}\},
\label{eq:70}
\end{equation}
\begin{equation}
L^{*}(x, y)\theta\{z, w, u\}=-R^{*}(x, u)\theta\{y, z, w\},
\label{eq:71}
\end{equation}
\begin{equation}
\theta\{x, \{y, z, w\}, u\}-\theta\{u, \{y, z, w\}, x\}
\label{eq:72}
\end{equation}

\hspace{3.7cm}$=\theta\{x, \{u, z, w\}, y\}+R^{*}(x, u)\theta\{y, z, w\}$

\hspace{3.7cm}$-R^{*}(u, x)\theta\{y, z, w\}-R^{*}(x, y)\theta\{u, z, w\}$,
\\then $\theta$ is called a cocycle of  the semi-associative 3-algebra $A$, where $L^*, R^*$ are defined by \eqref{eq:68}.
\label{defn:52}
\end{defn}

\begin{theorem} Let  $A$ be a semi-associative 3-algebra, and  $\theta : A\otimes A\otimes A \rightarrow A^{*}$ be  a  cocycle of $A$. Then  $( A\dot+ A^{*}, \{, , \}_{\theta})$ is a semi-associative 3-algebra, where  $\forall  x_i\in A, \xi_i\in A^*, 1\leq i\leq 3,$

\begin{equation} \{x_1+\xi_1, x_2+\xi_2, x_3+\xi_3\}_{\theta}
=\{x_1, x_2, x_3\}+\theta\{x_1, x_2, x_3\}+L^{*}(x_1, x_2)\xi_3
\label{eq:73}
\end{equation}
\hspace{7.2cm}$-R^{*}(x_1, x_3)\xi_2+R^{*}(x_2, x_3)\xi_1. $
\label{thm:54}
\end{theorem}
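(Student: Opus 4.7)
The strategy is to decompose the bracket $\{ , , \}_{\theta}$ in \eqref{eq:73} as the sum of two pieces: the semi-direct product bracket $\{ , , \}_{0}$ associated with the dual double module, plus a $\theta$-correction that lands in $A^{*}$. Concretely, by Theorem \ref{thm:52} and Theorem \ref{thm:53} (see also Corollary \ref{coro:51}), the triple $(L^{*}, R^{*}, A^{*})$ is a double module of $A$, so Theorem \ref{thm:51} already gives that
$$\{x_1{+}\xi_1, x_2{+}\xi_2, x_3{+}\xi_3\}_{0} := \{x_1,x_2,x_3\}+L^{*}(x_1,x_2)\xi_3-R^{*}(x_1,x_3)\xi_2+R^{*}(x_2,x_3)\xi_1$$
makes $A\dot+A^{*}$ into a semi-associative $3$-algebra. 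Since $\{ , , \}_{\theta}=\{ , , \}_{0}+\theta(\pi\otimes\pi\otimes\pi)$, where $\pi:A\dot+A^{*}\to A$ is the projection, each of the defining identities \eqref{eq:21}--\eqref{eq:23} for $\{ , , \}_{\theta}$ decouples into (i) the identity for $\{ , , \}_{0}$ (automatic) and (ii) a residual identity involving only $\theta$ together with $L^{*}$, $R^{*}$ applied to $\theta$-values. The aim is to show that these residual identities are exactly \eqref{eq:69}--\eqref{eq:72}.

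For \eqref{eq:21}, the semi-direct product part is already skew in the first two arguments, and the residual condition $\theta\{x_1,x_2,x_3\}=-\theta\{x_2,x_1,x_3\}$ is \eqref{eq:69}. For \eqref{eq:22}, one expands both sides of $\{x_1{+}\xi_1,\{x_2{+}\xi_2,x_3{+}\xi_3,x_4{+}\xi_4\}_{\theta},x_5{+}\xi_5\}_{\theta}$ and $\{x_1{+}\xi_1,x_2{+}\xi_2,\{x_3{+}\xi_3,x_4{+}\xi_4,x_5{+}\xi_5\}_{\theta}\}_{\theta}$, notices that all $\xi_i$-contributions cancel because $\{ , , \}_{0}$ satisfies \eqref{eq:22}, and collects the $\theta$-terms. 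The outer $\theta$ produces $\theta\{x_1,\{x_2,x_3,x_4\},x_5\}$ on the left and $\theta\{x_1,x_2,\{x_3,x_4,x_5\}\}$ on the right (equal by \eqref{eq:70}), while the inner $\theta\{x_2,x_3,x_4\}$ gets hit by $-R^{*}(x_1,x_5)$ on the left and the inner $\theta\{x_3,x_4,x_5\}$ gets hit by $L^{*}(x_1,x_2)$ on the right. The residual identity is $L^{*}(x_1,x_2)\theta\{x_3,x_4,x_5\}=-R^{*}(x_1,x_5)\theta\{x_2,x_3,x_4\}$, which is exactly \eqref{eq:71}.

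Identity \eqref{eq:23} is handled by the same bookkeeping. After subtracting the semi-direct product identity (which holds by Theorem \ref{thm:51}), the surviving contributions consist of the three ``outer'' $\theta$-values $\theta\{x_1,\{x_2,x_3,x_4\},x_5\}$, $\theta\{x_5,\{x_2,x_3,x_4\},x_1\}$, $\theta\{x_1,\{x_5,x_3,x_4\},x_2\}$, together with the three ``inner'' contributions $-R^{*}(x_1,x_5)\theta\{x_2,x_3,x_4\}$, $-R^{*}(x_5,x_1)\theta\{x_2,x_3,x_4\}$, $-R^{*}(x_1,x_2)\theta\{x_5,x_3,x_4\}$. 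Rearranging equates these to \eqref{eq:72} verbatim under the substitution $(x,y,z,w,u)\mapsto (x_1,x_2,x_3,x_4,x_5)$.

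The main obstacle is the bookkeeping for \eqref{eq:22} and \eqref{eq:23}: one must be careful that the $\xi_i$-only contributions truly cancel using the five double module identities \eqref{eq:51}--\eqref{eq:56} applied to $(L^{*},R^{*},A^{*})$, so that the residual equations involve \emph{only} $\theta$. Once that decoupling is justified by invoking Theorem \ref{thm:51}, the remaining verification reduces to reading off \eqref{eq:69}--\eqref{eq:72} from the corresponding residuals, which completes the proof.
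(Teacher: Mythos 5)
Your proposal is correct and follows essentially the same route as the paper: the paper likewise verifies \eqref{eq:21}--\eqref{eq:23} for $\{\,,\,,\,\}_{\theta}$ by direct expansion, letting Theorem \ref{thm:53} (the dual double module $(L^{*},R^{*},A^{*})$) absorb the $\xi$-contributions and Definition \ref{defn:52} supply exactly the residual $\theta$-identities you isolate. Your explicit decomposition $\{\,,\,,\,\}_{\theta}=\{\,,\,,\,\}_{0}+\theta(\pi\otimes\pi\otimes\pi)$ just makes that bookkeeping cleaner; the matching of the residuals with \eqref{eq:69}--\eqref{eq:72} checks out.
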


\begin{proof} Thanks to \eqref{eq:51} - \eqref{eq:56}, $\forall  x_i\in A, \xi_i\in A^*, 1\leq i\leq 3,$

$\{x_1+\xi_1, x_2+\xi_2, x_3+\xi_3\}_{\theta}$$=-\{x_2+\xi_2, x_1+\xi_1, x_3+\xi_3\}_{\theta}$,  \eqref{eq:21} holds.

By Definition \ref{defn:52},

\vspace{2mm}$\{x_1+\xi_1, \{x_2+h_2, x_3+h_3, x_4+\xi_4\}_{\theta}, x_5+\xi_5\}_{\theta}$

\vspace{1mm}\noindent
$=\{x_1, x_2, \{x_3, x_4, x_5\}\}+\theta\{x_1, x_2, \{x_3, x_4, x_5\}\}+L^{*}(x_1, x_2)L^{*}(x_3, x_4)\xi_5$

\vspace{1mm}$+L^{*}(x_1, x_2)\theta\{x_3, x_4, x_5\}-L^{*}(x_1, x_2)R^{*}(x_3, x_5)\xi_4+L^{*}(x_1, x_2)R^{*}(x_4, x_5)\xi_3$

\vspace{1mm}$-R^{*}(x_1, \{x_3, x_4, x_5\})\xi_2+R^{*}(x_2, \{x_3, x_4, x_5\})\xi_1$

\vspace{1mm}\noindent$=\{x_1+\xi_1, x_2+\xi_2, \{x_3+\xi_3, x_4+\xi_4, x_5+\xi_5\}_{\theta}\}_{\theta}$,  \eqref{eq:22} holds.

Thanks to Theorem \ref{thm:53},

\vspace{2mm}$\{x_1, \{x_2, x_3, x_4\}, x_5\}+\theta\{x_1, \{x_2, x_3, x_4\}, x_5\}+L^{*}(x_1, \{x_2, x_3, x_4\})\xi_5$

\vspace{1mm}$-R^{*}(x_1, x_5)\theta\{x_2, x_3, x_4\}-R^{*}(x_1, x_5)L^{*}(x_2, x_3)\xi_4+R^{*}(x_1, x_5)R^{*}(x_2, x_4)\xi_3$

\vspace{1mm}$-R^{*}(x_1, x_5)R^{*}(x_3, x_4)\xi_2+R^{*}(\{x_2, x_3, x_4\}, x_5)\xi_1$

\vspace{1mm}\noindent$=\{x_5+\xi_5, \{x_2+\xi_2, x_3+\xi_3, x_4+\xi_4\}_{\theta}, x_1+\xi_1\}_{\theta}+\{x_1+\xi_1, \{x_5+\xi_5, x_3+\xi_3, x_4+\xi_4\}_{\theta}, x_2+\xi_2\}_{\theta}$,
\\
we get

\vspace{1mm}\noindent $\{x_1+\xi_1, \{x_2+\xi_2, x_3+\xi_3, x_4+\xi_4\}_{\theta}, x_5+\xi_5\}_{\theta}$

\vspace{1mm}\noindent$=\{x_5+\xi_5, \{x_2+\xi_2, x_3+\xi_3, x_4+\xi_4\}_{\theta}, x_1+\xi_1\}_{\theta}+\{x_1+\xi_1, \{x_5+\xi_5, x_3+\xi_3, x_4+\xi_4\}_{\theta}, x_2+\xi_2\}_{\theta}$,   that is, \eqref{eq:23} holds.
Therefore, $(A\dot+ A^{*}, \{ , , \}_{\theta})$ is a semi-associative 3-algebra.
\end{proof}

The  semi-associative 3-algebra $( A\dot+ A^{*}, \{, , \}_{\theta})$  is called {\it  a double extension of $3$-semi-associative algebra} by $\theta$.

\bibliography{}

\end{document}